
\documentclass[12pt,draftclsnofoot,onecolumn]{IEEEtran}

\usepackage[cmex10]{amsmath}
\usepackage{amssymb,setspace}
\usepackage{color,epsfig,graphics}

\usepackage{cite}
\newtheorem{theorem}{\text{Theorem}}
\newtheorem{corollary}{\text{Corollary}}
\newtheorem{lemma}{\text{Lemma}}

\newtheorem{proposition}{\text{Proposition}}
\newtheorem{remark}{\text{Remark}}
\allowdisplaybreaks
\hyphenation{op-tical net-works semi-conduc-tor}

\begin{document}
\singlespacing
\title{Degrees of Freedom for the MIMO Multi-way Relay Channel \footnote{This work was supported in
    part by NSF grants 0964364 and 0964362, and is presented in part at ICCC 2012, ICC 2013 and ISIT 2013.}}
\author{\IEEEauthorblockN{Ye Tian\ \ \ Aylin
    Yener} \\
  \IEEEauthorblockA{Wireless Communications and Networking Laboratory\\
    Electrical Engineering Department\\ The Pennsylvania
    State University, University Park, PA 16802\\
    \textit{yetian@psu.edu\ \ \ yener@ee.psu.edu}}}

\maketitle
\vspace{-0.5in}

\thispagestyle{empty}
\doublespacing
\begin{abstract}
This paper investigates the degrees of freedom (DoF) of the
$L$-cluster, $K$-user MIMO multi-way relay channel, where users in each
cluster wish to exchange messages within the cluster, and they can
only communicate through the relay. A novel DoF upper bound is derived by providing users with
carefully designed genie information. Achievable DoF is identified using signal space alignment
and multiple-access transmission. For the two-cluster MIMO multi-way relay
channel with two users in each cluster, DoF is established for the general
case when users and the relay have arbitrary number of antennas, and it is shown
that the DoF upper bound can be achieved using signal space alignment
or multiple-access transmission, or a combination of both. The result
is then generalized to the three user case. For the $L$-cluster
$K$-user MIMO multi-way relay channel in the symmetric setting,
conditions under which the DoF upper bound can be achieved are
established. In addition to being shown to be
tight in a variety of scenarios of interests of the multi-way
relay channel, the newly derived upperbound also establishes the
optimality of several previously established achievable DoF results for multiuser relay channels that are
special cases of the multi-way relay channel.
\end{abstract}

\newpage
\setcounter{page}{1}
\section{Introduction}
The multi-way relay channel \cite{Multiway} is a
fundamental building block for relay networks with multicast
transmission, and can model several interesting communication scenarios. In cellular
networks, a set of mobile users can form a social network by forming clusters and
exchange information by communicating via the base station, which
serves as the relay in the multi-way relay channel. In ad hoc
networks, wireless nodes can be geographically
separated, yet they can communicate to a central controller to
share information in groups. This model is also relevant to satellite
communications, where the satellite serves as the relay and the users
have multicast information that needs to be shared with the help of
the satellite \cite{Multiway}.

The simplest special case of the multi-way relay channel is the two-way relay
channel, which consists of two users that wish to exchange
information with the help of a relay. The capacity of the two-way
relay channel has been studied extensively, see for example
\cite{shannontwoway,Aves_twoway,twowayconst,bidirection,MinChenMUTWRC} and
the references therein. Even for this simplest set up,
only constant gap capacity results is known \cite{twowayconst}, achieved by physical layer
network coding, or functional decode-and-forward (FDF). 

In its general form, the multi-way relay channel, contains an arbitrary
number of clusters containing arbitrary number of users that want to exchange information. The relay needs to
handle interference that results from simultaneous transmissions of
different clusters, and the users need to recover the intended
messages in the presence of interfering signals containing messages for other
users. One might expect the strategies designed for the
two-way relay channel to be helpful, but more sophisticated
strategies are needed to handle the co-existence of messages intended for
different users.

The exact capacity characterization for the multi-way relay channel has been considered in references 
\cite{Multiway,Ong,Equalrate_twoway,LinearY,Ysum,divide}. Specifically,
reference \cite{Multiway} has proposed the general
multi-way relay channel model, and
characterized the upperbounds on the capacity region and established
achievable rates based on decode-and-forward (DF),
compress-and-forward (CF), amplify-and-forward (AF), and using nested
lattice codes. Reference \cite{Ong} has considered the special case when there is
one cluster of users, and each user wishes to exchange information
with the rest of the users. The capacity region is characterized for a finite field channel. It is shown that, for this case,
functional-decode-forward (FDF) combined with rate splitting and joint
source-channel coding achieves capacity. For the Gaussian multi-way
relay channel with one cluster, capacity result is obtained for some
special cases when the channel is symmetric using FDF \cite{Equalrate_twoway}. For the asymmetric multi-way
relay channel with a single cluster, also known as the Y channel, references \cite{LinearY,Ysum}
have obtained a constant gap capacity result
for all channel coefficient values for the three-user case. Reference \cite{divide} has studied the
multi-way relay channel with two clusters and each cluster has two
users with a single antenna, and established a
constant gap capacity result using a combination of lattice codes and
Gaussian codes. For the multi-cluster set up with two users in each cluster, i.e., the multi-pair two-way relay channel, reference
\cite{minchen_mutwoway} has studied the detection and interference
management strategies, and reference \cite{MinChenMUTWRC} has studied power allocation with orthogonal channels. For the general multi-way relay channel,
the exact capacity remains unknown due to the complexity of channel,
in turn making it difficult to obtain design insights for the
general setting. The DoF characterization, which studies how reliable communication rate scale with power, on the other hand, can provide us insights
about the optimal signal interaction in time/frequency/space
dimensions, and can be useful to design transmission schemes to achieve
higher rates.

The DoF of a wireless system characterizes its high
signal-to-noise ratio (SNR) performance. Interference alignment \cite{DoFMIMOX} has been shown to achieve
the optimal DoF for various wireless multi-terminal network models
\cite{CadambeKuser,DoFMIMOX,CadambeXNetwork,TiangaoKMN}. The essence of
interference alignment lies in keeping the interference signals in the
smallest number of time/frequency/space dimensions, and enabling the
maximum number of independent data streams to be transmitted. A similar concept, signal space
alignment, which is a special form of FDF, is proposed in reference \cite{signalspaceY} for the $Y$
channel. In this reference, the authors have shown that, by aligning the signals from
the users that want to exchange information at the same dimension,
network coding can be utilized to maximize the utilization of the
spatial dimension available at the relay to achieve the optimal DoF. In essence,
the goal of signal space alignment is to align the useful signals
together to maximize the utilization of signal dimension, whereas
the goal of interference alignment is to align
the harmful signals together to minimize the effect of
interference. In reference \cite{KYchannel}, the signal space
alignment idea is extended to the $K$-user Y channel, which has $K$
users in a single cluster that want to exchange information, and the
achievable DoF is established. For the MIMO multi-pair two-way relay channel, reference
\cite{IAinMUTWRC} has studied the requirement for the number of
antennas at the users to allow them exchange information with
the help of the relay without interfering each other in the symmetric
setting. The DoF of this channel is further studied in
reference \cite{DoF-MIMO_ML_TWRC}. Signal space alignment is further utilized
in reference \cite{asymmw}, which has considered a different variation of the
MIMO multi-way relay channel where a base station wants to exchange
information with $K$ users with the help of a relay. The DoF
of this model is established under some specific relations between the
number of antennas at the relay and at the users.

For the DoF characterization of MIMO multi-way relay channels, the known DoF
upperbound obtained to date is a cut-set bound, which can provide a tight upperbound
for the two-way relay channel, three-user $Y$ channel and two-cluster multiway relay channel
with two users in each cluster, but can be arbitrarily
loose for other instances of the model.

In this work, we derive a new DoF upper bound for the $L$-cluster
$K$-user MIMO multi-way relay channel using a genie-aided approach, such that the user with enhanced signal and
a carefully designed set of genie information can decode a subset of
messages from the other users. We show that the DoF for the MIMO multi-way relay
channel is always upper bounded by $2N$ with $N$ being the number of
antennas at the relay. This DoF upper bound, combined with the cut-set
bound, provides us a comprehensive set of DoF upper bounds for the
general MIMO multi-way relay channel. This allows us to show that the DoF upperbound
is tight for some achievable DoF results in previous works
corresponding to special cases of the multi-way relay channel.

Next, we investigate
the achievable DoF for several scenarios of the
MIMO multi-way relay channel. We utilize signal space alignment, where the users utilize the signal
space of the relay in common, and the relay can decode a function of
the transmitted signals from a pair of users and multiple-access transmission, where
the users do not share the signal space of the relay,
and the relay simply decodes the transmitted signals as in the multiple
access channel to establish the achievability results.

For clarity, we first consider the case with two clusters
and each cluster has two users, and the users and the relay can have
arbitrary number of antennas. We show that for some cases, signal
space alignment achieves the optimal DoF. For the remainder of the
cases, the DoF upper bound can be achieved using multiple-access
transmission or a combination of multiple-access transmission and
signal space alignment. The DoF
results imply that letting users utilize the relay in common is not
always optimal. Additionally, for some cases using only a subset of antennas at the relay can achieve
higher DoF. We next generalize the results to the case
with two clusters and each cluster has three users, and obtain the optimal
DoF for several scenarios of interests. We then consider the
$L$-cluster $K$-user MIMO multi-way relay channel in the symmetric setting,
where all users have the same number of antennas. Conditions between
the number of antennas at the relay and the users are established when
the DoF upper bound can be achieved. The DoF result implies that the DoF for the
MIMO multi-way relay channel is always limited by the spatial dimension
available at the relay, and increasing the number of users and clusters
cannot achieve DoF gain when the number of antennas at the relay
is limited. Furthermore, since using signal
space alignment to share the signal space of the relay between two
users can provide 1 bit for 2 bits gain, the DoF upper bound $2N$
provides the insight that we cannot obtain any further DoF gain by letting three or more users to share
the same spatial dimension of the relay.

The remainder of the paper is organized as follows. Section
\ref{sec:system-model} describes the channel model. Section
\ref{sec:dof-upperbound} derives the new DoF upper bound for the
general MIMO multi-way relay channel. Section
\ref{sec:two-cluster-multiway} investigates the DoF for the
two-cluster MIMO multi-way relay channel. Section
\ref{sec:general-multiway} investigates the DoF for the general
MIMO multi-way relay channel. Section \ref{sec:conclusion} concludes the
paper.

\section{Channel Model}\label{sec:system-model}
\begin{figure}[t]
\centering
\includegraphics[width=4in]{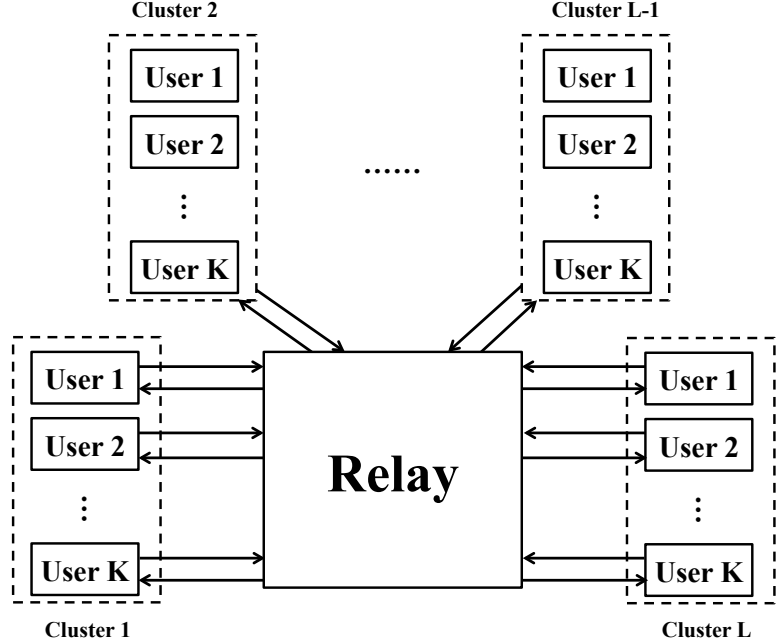}
\caption{K-user L-cluster MIMO multi-way relay channel.} \label{fig:system}
\end{figure}

The $L$-cluster $K$-user MIMO multi-way relay channel is shown in
Fig. \ref{fig:system}. User $k$ $(k=1,2,\cdots,K)$ in cluster $l$
$(l=1,2,\cdots,L)$ is assumed to have $M_k^l$ antennas, and the relay
is assumed to have $N$ antennas. Without loss of generality, we assume
that $M_1^l\ge M_2^l \ge \cdots \ge M_K^l$.

In cluster $l$, user $k$ has a message $W_{ik}^l$ $(i=1,2,\cdots,K,ï½ž
i\ne k)$, for all the other users in
cluster $l$. We denote
$\mathcal{W}_k^l$ as the message set originated from user $k$ in
cluster $l$ for all the other users in the same cluster, i.e.,
\begin{equation}
\mathcal{W}_k^l=\{W_{1k}^l,W_{2k}^l,\cdots,W_{k-1,k}^l,W_{k+1,k}^l,\cdots,W_{Kk}^l\}.
\end{equation}

It is assumed that the users can communicate only through the
relay and no direct links exist between any
pairs of users \cite{Multiway}. All the nodes in the network are assumed to be full duplex. The
transmitted signal from user $k$ in cluster $l$ for channel use $t$ is denoted as
${\bf X}_{k,l}(t)\in \mathbb{C}^{M_k^l}$. The received signal at the
relay for channel use $t$ is denoted as ${\bf Y}_R(t)\in
\mathbb{C}^N$. The received signal at user $k$ in cluster $l$ for
channel use $t$ is defined as ${\bf Y}_{k,l}(t)\in
\mathbb{C}^{M_k^l}$. The channel matrix from user $k$ in cluster $l$
to the relay is denoted as ${\bf H}_{R(k,l)}(t)\in \mathbb{C}^{N\times
  M_k^l}$. The channel matrix from the relay to user $k$ in cluster
$l$ is denoted as ${\bf H}_{(k,l)R}(t)\in \mathbb{C}^{M_k^l\times
  N}$. It is assumed that the entries of the channel
matrices are drawn independently from a continuous distribution, which
guarantees that the channel matrices are full rank almost surely.

The encoding function at user $k$ in cluster $l$ is defined as
\begin{equation}
{\bf X}_{k,l}(t)=f_{k,l}(\mathcal{W}_k^l,{\bf Y}_{k,l}^{t-1}),
\end{equation} where ${\bf Y}_{k,l}^{t-1}=[{\bf Y}_{k,l}(1),\cdots,{\bf Y}_{k,l}(t-1)]$.

The received signal at the relay is
\begin{equation}
{\bf Y}_R(t)=\sum_{l=1}^L\sum_{k=1}^{K}{\bf H}_{R(k,l)}(t){\bf X}_{k,l}(t)+{\bf Z}_R(t).
\end{equation}

For channel use $t$, the transmitted signal ${\bf X}_{R}(t)\in \mathbb{C}^N$ from the relay is a function of
its received signals from channel use $1$ to $t-1$, i.e.,
\begin{equation}
{\bf X}_R(t)=f_R({\bf Y}_R^{t-1}).
\end{equation}

The received signal at user $k$ in cluster $l$ for channel use $t$ is
\begin{equation}
{\bf Y}_{k,l}(t)={\bf H}_{(k,l)R}(t){\bf X}_R(t)+{\bf Z}_{k,l}(t).
\end{equation}

In the above expressions, ${\bf Z}_{k,l}(t) \in \mathbb{C}^{M_k^l},{\bf
  Z}_R(t)\in \mathbb{C}^N$ are additive white Gaussian noise
vectors with zero mean and independent components. The
transmitted signals from the users and the relay satisfy the following
power constraints:
\begin{equation}
E\left[{\rm tr}\left({\bf X}_{k,l}(t){\bf X}_{k,l}(t)^\dagger\right)\right]\le
    P,
\end{equation}
\begin{equation}
E\left[{\rm tr}\left({\bf X}_R(t){\bf X}_R(t)^\dagger\right)\right]\le P.
\end{equation}

Based on the received signals and the message set $\mathcal{W}_k^l$,
user $k$ in cluster $l$ needs to decode all the messages intended for
it, which is denoted as
\begin{equation}
\hat{\mathcal{W}}_k^l = \{\hat{W}_{k,1}^l,\hat{W}_{k,2}^l,
\cdots,\hat{W}_{k,k-1}^l,\hat{W}_{k,k+1}^l,\cdots,\hat{W}_{k,K}^l\}.
\end{equation}
We also have
\begin{equation}
\hat{\mathcal{W}}_k^l=g_{k,l}({\bf Y}_{k,l}^n,\mathcal{W}_{k}^l)
\end{equation} where $g_{k,l}$ is the decoding function for user $k$
in cluster $l$.

We assume the rate of message $W_{ik}^l$ is $R_{ik}^l(P)$ under power
constraint $P$. A rate tuple $\{R_{ik}^l(P)\}$ with $l=1,\cdots,L$, $k=1,\cdots,K$ and
$i=1,\cdots,K$, $i\ne k$ is achievable if the
error probability
\begin{equation}
P_e^n={\rm Pr}\left(\bigcup_{l,k,i}\hat{W}_{i,k}^l\ne W_{i,k}^l\right) \to 0
\end{equation} as $n\to \infty$.

We define $\mathcal{C}(P)$ as the set of all
achievable rate tuples $\left\{R_{ik}^l(P)\right\}$, under
power constraint $P$. The degrees of freedom is defined as
\begin{equation}
DoF = \lim_{P\to \infty} \frac{R_{\sum}(P)}{\log (P)},
\end{equation} where
\begin{align}
R_{\sum}(P)=&\sup_{{\left\{R_{ik}^l(P)\right\}\in\mathcal{C}(P)}}
\sum_{l=1}^L\sum_{k=1}^K\sum_{\substack{i=1\\i\ne k}}^KR_{ik}^l(P)
\end{align}
is the sum capacity under power constraint $P$.

\section{DoF Upperbound for General MIMO multi-way Relay Channel}\label{sec:dof-upperbound}
\begin{theorem}\label{thm-DoF-upperbound}
For the general $L$-cluster $K$-user MIMO multi-way relay channel,
the DoF upperbound is
\begin{equation}
DoF\le \min\left\{\sum_{l=1}^L\sum_{k=1}^KM_k^l,2\sum_{l=1}^L\sum_{k=2}^KM_k^l,2N\right\}.\label{eq:4}
\end{equation}
\end{theorem}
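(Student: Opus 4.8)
The plan is to establish the three terms in the $\min$ separately, then combine them. The first term, $\sum_{l,k} M_k^l$, is immediate from the MIMO cut-set bound: cut around all the user nodes, so that all transmit antennas across all users together can support at most $\sum_{l,k} M_k^l$ DoF into the relay, and symmetrically on the downlink; this is the standard argument and requires no new idea. The genuinely new content is the bound $DoF \le 2N$, and the bound $DoF \le 2\sum_{l}\sum_{k=2}^K M_k^l$, which is the ``$2N$-type'' argument applied after noting that in each cluster the largest user $k=1$ can be treated as a genie recipient. So the heart of the proof is the genie-aided argument for $2N$, and I would organize everything around that.

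For the $2N$ bound, the idea I would pursue is to hand user $1$ in cluster $1$ (the user with the most antennas, by the WLOG ordering) a ``genie'' consisting of all the other users' messages \emph{except} its own desired messages $\mathcal{W}$-complement, together with a slightly enhanced received signal, so that this enhanced user can reconstruct the relay's transmitted signal ${\bf X}_R(t)$ up to noise, and hence by running the relay's (deterministic) encoding map backward in the sense of the received signal, can reconstruct ${\bf Y}_R(t)$ up to noise for every $t$. Once one node holds (a degraded version of) ${\bf Y}_R^n$ and enough side messages, standard Fano-plus-entropy manipulations show that the total number of message bits that must be carried through the relay's $N$-dimensional received signal over $n$ channel uses is at most $nN\log P + o(n\log P)$ for the ``uplink'' phase, and likewise $nN\log P$ worth can flow out of the relay's $N$-dimensional transmit signal on the ``downlink''; adding these gives the $2N$ factor. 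The role of the genie is precisely to remove the messages that would otherwise make the enhanced user's decoding problem underdetermined, so that the only remaining uncertainty is what flows through the relay bottleneck. The $2\sum_l \sum_{k=2}^K M_k^l$ bound then comes from the same construction but where, instead of the relay's $N$ antennas being the bottleneck, one argues that the signal the relay must convey is constrained to lie in the span associated with the users $k=2,\dots,K$ in each cluster (since user $1$'s own contribution is the genie and need not be re-conveyed), yielding $\sum_l\sum_{k=2}^K M_k^l$ per direction.

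In detail, the steps I would carry out are: (i) fix a genie set for one designated user per cluster and write out Fano's inequality for all users' decoders; (ii) show that the designated enhanced user, given its genie and its own message, can form an estimate of ${\bf X}_R^n$ whose error is within the noise floor, hence $h({\bf Y}_R^n \mid \text{genie}, \text{enhanced signal})$ is $o(n\log P)$; (iii) bound the sum of the rates of all messages \emph{not} in the genie by $I$ of the relevant relay signal, using the chain rule and the Markov structure ${\bf X}_R(t) = f_R({\bf Y}_R^{t-1})$; (iv) bound that mutual information by $nN\log P$ (or $n\sum_l\sum_{k=2}^K M_k^l \log P$) using the rank of the channel and the power constraint; (v) do the symmetric argument for the opposite phase / the genied messages; (vi) sum and divide by $\log P$, then let $P\to\infty$ and $n\to\infty$.

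The main obstacle I anticipate is step (ii)–(iii): precisely specifying \emph{which} messages go into the genie so that, simultaneously, (a) the enhanced user really can peel off all the interfering codewords and invert down to ${\bf Y}_R^n$, and (b) the messages left out of the genie — the ones whose rates we are actually bounding — still account for a full $2N$ worth of DoF rather than something smaller, so that no DoF is ``wasted'' by being conservatively placed in the genie. Getting this accounting tight (and checking it is consistent with the WLOG antenna ordering $M_1^l \ge \cdots \ge M_K^l$, which is what makes user $1$ the right choice to enhance) is the delicate part; the entropy inequalities themselves, once the genie is fixed, are routine.
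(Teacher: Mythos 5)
Your overall instinct --- a genie-aided enhancement of the largest user in each cluster, with the relay's $N$ dimensions as the bottleneck --- is the right one, but the accounting that produces the factor $2$ is where your argument breaks, and it is precisely the step you flag as ``delicate'' without resolving. Your plan is to get $2N$ as (uplink $\le N$) $+$ (downlink $\le N$). Since there are no direct links, \emph{every} message must traverse both the uplink and the downlink, so an uplink bound and a downlink bound each constrain the \emph{same} total sum rate; they cannot be added. Two bounds of $N$ add to $2N$ only if they apply to \emph{disjoint} sets of messages. The paper's proof supplies exactly this missing partition: it splits the messages of each cluster into the ``upper-triangular'' half $\{W_{ki}^l : k<i\}$ and its complement, and bounds \emph{each half separately by $N$ using the same downlink inequality} $I(X_R^n;\mathcal{Y}_1^n,\dots,\mathcal{Y}_L^n)\le nN\log P$, applied once at an enhanced user $1$ and once at an enhanced user $K$. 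The genie for user $1$ is not ``all messages except its desired ones'' (which would bound only $O(\min\{M_1^l,N\})$ worth of rate per cluster and waste everything else); it is the received signals ${\bf Y}_{k,l}^n$ of the other users in its cluster together with the lower-triangular messages $\{W_{ik}^l: k\ge 2,\ i>k\}$, fed in sequentially so that at step $k$ user $1$ holds exactly user $k$'s side information and can therefore emulate user $k$'s decoder to recover $\{W_{ki}^l : i>k\}$. That sequential, triangular construction is the key idea of the theorem and is absent from your proposal.

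Two smaller points. First, you do not need (and cannot in general have) the inversion ``from ${\bf X}_R^n$ back to ${\bf Y}_R^n$'': $f_R$ need not be invertible, and the proof never uses the relay's received signal --- it only uses that all users' received signals are functions of $X_R^n$ plus noise, so that $H(\mathcal{Y}^n)-H(\mathcal{Y}^n\mid X_R^n)\le nN\log P+o(n\log P)$. Second, the term $2\sum_{l}\sum_{k=2}^{K}M_k^l$ is not another instance of the genie argument; it follows from an ordinary cut-set reduction (give each cluster all other clusters' messages, then for each user collapse the remaining users into a single node, yielding per-user two-way-relay bounds $\sum_{k=2}^K d_{1k}^l\le\sum_{k=2}^K M_k^l$ and $\sum_{k\ne i}d_{ik}^l\le M_i^l$ for $i\ge 2$, which sum to the stated term).
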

\begin{proof}
The first two terms of the upperbound can be derived using a cut set
bound as follows. Note that by assumption we have $M_1^l\ge M_2^l \ge \cdots \ge
M_K^l$. For the messages in cluster $l$, we give the users in cluster
$l$ all the messages from all the other clusters. We also provide the
relay all the messages from all the clusters except cluster $l$, and
provide all the other clusters all the messages from all clusters. This operation
does not reduce the rate of the messages in cluster $l$. Now the
channel is effectively a MIMO multi-way relay channel with a single
cluster. For the messages intended for user $i$, we can combine all
the other users except for user $i$, which yields a two-way relay
channel with user $i$ as a node, and all the other users as a node. We can then
bound the DoF for the messages in cluster $l$ in the following fashion:
\begin{align}
&\sum_{k=2}^Kd_{1k}^l\le \min\{M_1^l,N,\sum_{k=2}^KM_k^l\}\\
&\sum_{\substack{k=1\\k\neq i,i\neq 1}}^Kd_{ik}^l\le \min\{M_i^l,N\}.
\end{align} This yields the desired DoF upperbound for the first two terms in
(\ref{eq:4}).

To prove $DoF\le 2N$, we consider user $1$ in each cluster. For user $1$ in cluster $l$, we provide user $1$ with all the received signals ${\bf Y}_{k,l}^n$ from user $k=2,3,\cdots,K$ in cluster $l$ as genie information \footnote{Alternatively, we can use a channel enhancement argument to prove the DoF upperbound, as shown in \cite{YeISIT13}.}. By assumption, user $k$ in cluster $l$ can decode messages $\left\{W_{ki}^l\right\}$ for
$i=\{1,\cdots,K\}/\{k\}$ given the received signal ${\bf Y}_{k,l}^n$ and
the side information $\mathcal{W}_{k}^l$. We now use the following steps to derive the
DoF upper bound.

\begin{itemize}
\item Step 1:
\begin{itemize}
\item User 1 in cluster $l$ has side information $\mathcal{W}_{j1}^l$,
$j=\{2,\cdots,K\}$, which are messages that originate from it.
\item User 1 in cluster $l$ can decode messages $\left\{W_{1i}^l\right\}$ for
$i=\{2,\cdots,K\}$.
\end{itemize}
\item Step 2:
\begin{itemize}
\item Let a genie provide user $1$ in cluster $l$ with messages
$\left\{W_{i2}\right\}$, $i=\{3,\cdots,K\}$.
\item User 1 now has the messages
  $\left\{W_{i2}\right\}$, $i=\{1,3,\cdots,K\}$, which is exactly
  all the side information available at user 2 in cluster $l$.
\item With the genie information ${\bf Y}_{2,l}^n$, user 1 in cluster
  $l$ can decode all the messages intended for user 2 in cluster $l$,
  i.e., the messages $\left\{W_{2i}^l\right\}$ for $i=\{3,\cdots,K\}$.
\end{itemize}
\item Step 3:
\begin{itemize}
\item Let a genie provide user 1 in cluster $l$ with messages
  $\left\{W_{i3}\right\}$, $i=\{4,\cdots,K\}$.
\item User 1 now has the messages
  $\left\{W_{i3}\right\}$, $i=\{1,2,4,\cdots,K\}$, which is exactly
  all the side information available at user 3 in cluster $l$.
\item With the genie information ${\bf Y}_{3,l}^n$, user 1 in cluster $l$ can
  decode the messages $\left\{W_{3i}^l\right\}$ for $i=\{4,\cdots,K\}$.
\end{itemize}
\item Proceed in the same fashion for the next steps, i.e., for Step $k$:
\begin{itemize}
\item Let a genie provide
  user 1 in cluster $l$ with messages $\left\{W_{ik}\right\}$,
  $i=\{k+1,\cdots,K\}$.
\item User 1 now has the messages
  $\left\{W_{ik}\right\}$, $i=\{1,\cdots,K\}/\{k\}$, which is exactly
  all the side information available at user $k$ in cluster $l$.
\item With the genie information ${\bf Y}_{k,l}^n$, user 1 in cluster $l$ can
  decode the messages $\left\{W_{ki}^l\right\}$ for $i=\{k+1,\cdots,K\}$.
\end{itemize}
\item Step $K-1$:
\begin{itemize}
\item Let a genie provide user 1 in cluster $l$ with message
  $W_{K,K-1}^l$.
\item User 1 now has the messages $\left\{W_{i,K-1}\right\}$,
  $i=\{1,\cdots,K-2,K\}$, which is exactly all the side information
  available at user $K-1$ in cluster $l$.
\item User 1 in cluster $l$ can decode the message
  $W_{K-1,K}$.
\end{itemize}
\end{itemize}
\begin{figure}[t]
\centering
\includegraphics[width=5in]{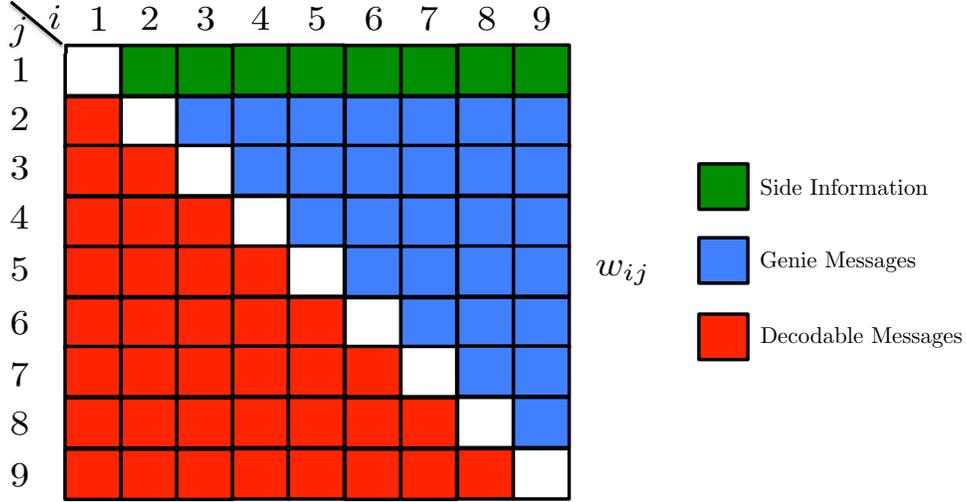}
\caption{Illustration for side information, genie information and
  decodable messages for the DoF upperbound at user $1$.} \label{fig-outerbnd}
\end{figure}

\begin{figure}[t]
\centering
\includegraphics[width=5in]{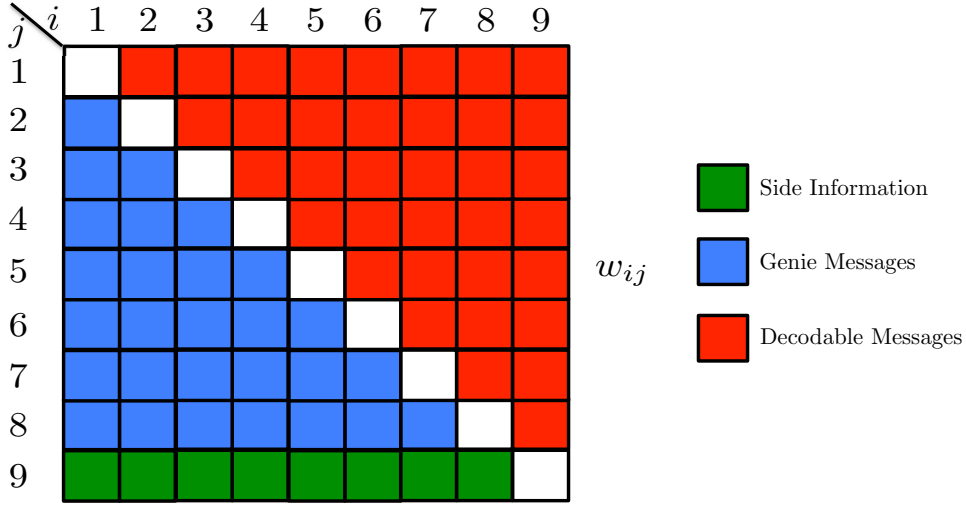}
\caption{Illustration for side information, genie information and
  decodable messages for the DoF upperbound at user $K$.} \label{fig-outerbnd_2}
\end{figure}
Based on the above arguments, user 1 in cluster $l$ can decode the
messages $\left\{W_{ki}^l\right\}$ for $k=1,\cdots, K-1$,
$i=k+1,\cdots,K$. This upperbounding process is illustrated in
Fig. \ref{fig-outerbnd} for $K=9$. We can see that half of the messages in cluster $l$ can
be decoded at user 1 in cluster $l$ based on the received signals
${\bf Y}_{k,l}^n$, $k=1,\cdots,K$, and the other half of the messages in the cluster as side information, which include the messages $\mathcal{W}_{1}^l$ and the genie information $\left\{W_{ik}^l\right\}$ for $k=2,\cdots,K-1$, $i=k+1,\cdots,K$.

Define $\mathcal{W}_{d}^l$ as the set of messages $\left\{W_{ki}^l\right\}$ for $k=1,\cdots, K-1$,
$i=\{k+1,\cdots,K\}$ for cluster $l$, which are messages that can be
decoded by user 1 in cluster $l$, and $\mathcal{W}^l$ as all the messages from
cluster $l$ and $\mathcal{W}_d^{lc}$ as the set of messages
$\mathcal{W}^l/\mathcal{W}_d^l$. Denote the set of received signals ${\bf Y}_{k,l}$, $k=1,\cdots,K$ in cluster $l$ by $\mathcal{Y}_{l}$. We can then bound the rate of the
decodable messages as follows:

\begin{align}
&n\sum_{l=1}^L\sum_{k=1}^{K-1}\sum_{i=k+1}^{K}R_{ki}^l\\
&=H(\mathcal{W}_d^1,\cdots,\mathcal{W}_d^L|\mathcal{W}_d^{1c},\cdots,\mathcal{W}_d^{Lc})\\
&=I(\mathcal{W}_d^1,\cdots,\mathcal{W}_d^L;\mathcal{Y}_{1}^n,\cdots,\mathcal{Y}_{L}^n|\mathcal{W}_d^{1c},\cdots,\mathcal{W}_d^{Lc})+n\epsilon_n\\
&\le
  H(\mathcal{Y}_{1}^{n},\cdots,\mathcal{Y}_{L}^{n})-H(\mathcal{Y}_{1}^{n},\cdots,\mathcal{Y}_{L}^{n}|\mathcal{W}^{1},\cdots,\mathcal{W}^{L},X_R^n)+n\epsilon_n\\
&=H(\mathcal{Y}_{1}^{n},\cdots,\mathcal{Y}_{L}^{n})-H(\mathcal{Y}_{1}^{n},\cdots,\mathcal{Y}_{L}^{n}|X_R^n)+n\epsilon_n\label{eq:11}\\
&=I(X_R^n;\mathcal{Y}_{1}^{n},\cdots,\mathcal{Y}_{L}^{n})\label{eq:1}
\end{align} where equation (\ref{eq:11}) follows since the received
signal at the users only depends on the transmitted signal from the relay.

From equation (\ref{eq:1}), we can see that
\begin{equation}
\lim_{SNR\to
  \infty}\frac{\sum_{l=1}^L\sum_{k=1}^{K-1}\sum_{i=k+1}^{K}R_{ki}^l}{\log
  SNR}\le N. \label{eq:2}
\end{equation}

We now have an upper bound for half of the messages from all
users. We can bound the DoF for the rest
of the messages by enhancing the received signal of user $K$ and
provide genie information to user $K$ in a similar fashion, as
illustrated in Fig. \ref{fig-outerbnd_2}, which yields
\begin{equation}
\lim_{SNR\to
  \infty}\frac{\sum_{l=1}^L\sum_{k=2}^{K}\sum_{i=1}^{k-1}R_{ki}^l}{\log
  SNR}\le N.\label{eq:3}
\end{equation}

Given (\ref{eq:2}) and (\ref{eq:3}), we have
\begin{equation}
DoF\le 2N.
\end{equation}
\end{proof}

\subsection{Optimality of Achievable DoF for Special Cases in Previous Work}\label{sec:optim-achi-dof}
We can now evaluate the optimality of the achievable DoF for some special cases of the
MIMO multi-way relay channel provided in previous work using the newly derived DoF upperbound.

\begin{itemize}
\item {\it MIMO $K$-user $Y$ channel \cite{KYchannel}}: This is the MIMO multi-way relay channel
  with one cluster that contains $K$ users. User $i$ has $M_i$
  antennas and the relay has $N$ antennas. It was shown in \cite{KYchannel} that each user can send $K-1$
  independent data streams with DoF $d$ for each stream if
  \begin{equation}
    M_i\ge d(K-1),~N\ge \frac{dK(K-1)}{2}\nonumber
  \end{equation}
  \begin{equation}
    N<\min\{M_i+M_j-d | \forall i\neq j\}.
  \end{equation}
For this case, our DoF upperbound specializes to 
\begin{equation}
DoF\le \min\{\sum_{i=1}^KM_i,2N\}.
\end{equation} If we have $M_i\ge d(K-1)$ and fix $N= \frac{dK(K-1)}{2}$,
the DoF upper bound becomes
\begin{equation}
DoF\le dK(K-1).
\end{equation} If we further have $M_i>\frac{K^2-K+2}{4}$, the condition
$N<\min\{M_i+M_j-d|\forall i\neq j\}$ is also satisfied, and the DoF upper
bound implies that the achievable DoF in \cite{KYchannel}, $dK(K-1)$, is indeed the optimal DoF.
\item {\it MIMO $K$-pair two-way relay channel
  \cite{IAinMUTWRC}}: This corresponds to the MIMO multi-way relay channel with $K$ clusters
  each with two users. Each user has $M$ antennas and wants to
  transmit $d$ data streams with DoF $1$. The relay has $Kd$ antennas. To
  guarantee interference-free transmission, we need
  \begin{equation}
    K\le \frac{2M}{d}-1,
  \end{equation}
  and the achievable DoF is $2Kd$.

For
  this case, the DoF upper bound becomes
\begin{equation}
DoF \le \min\{2KM,2Kd \}.
\end{equation} When $K\le \frac{2M}{d}-1$, we have $2Kd\le 2KM$ for
$K\ge 2$. The achievable DoF $2Kd$ given in reference \cite{IAinMUTWRC} is indeed the optimal DoF.
\end{itemize}

We now have seen that the newly derived upper bound is useful for proving tight results for some
special cases of the MIMO multi-way relay channel. In the next sections, we
utilize the upper bound to investigate the DoF of the more general MIMO multi-way relay channel, and provide our DoF findings.

\section{Two-Cluster MIMO multi-way Relay Channel}\label{sec:two-cluster-multiway}
With the newly derived DoF upper bound at hand, we now investigate the
achievable DoF for the general MIMO multi-way relay channel. We first focus
on the two-cluster MIMO multi-way relay channel. For the two-cluster case,
the only known result is the constant gap capacity result for the SISO
case \cite{divide} and the DoF for the two-user symmetric case
\cite{IAinMUTWRC}, i.e., users have the same number of antennas. Both
results are obtained using signal space alignment, or using techniques
that are in essence similar to signal space alignment such as using nested
lattice codes \cite{Uri_AWGN}. When the users have arbitrary number of antennas, the
optimal DoF has been unknown to date.

We first present the following lemma which characterizes the dimension of shared
signal space at the relay between two users with arbitrary number of antennas.

\begin{lemma} \label{lma}
For matrices ${\bf H}_1\in \mathbb{C}^{p\times q_1}$ and ${\bf
  H}_2\in \mathbb{C}^{p\times q_2}$, which have full rank almost
surely, the shared dimension of their column space can be specified as
follows. Note that without loss of generality we
assume $q_1\ge q_2$.

{\underline{Condition 1}}: If $p\ge q_1 \ge q_2$ and $q_1+q_2>p$, then there exist
$q_1+q_2-p$ non-zero linearly independent vectors ${\bf v}_i$ almost surely
such that we can find another two sets of linearly independent vectors ${\bf u}_i$ and
${\bf w}_i$, $i=1,\cdots,q_1+q_2-p$ such that
\begin{equation}
{\bf v}_i={\bf H}_1{\bf u}_{i}={\bf H}_2{\bf w}_{i}. \label{eq:12}
\end{equation}

{\underline{Condition 2}}: If $q_1\ge p \ge q_2$, then there exist
$q_2$ linearly independent vectors ${\bf v}_i$ almost surely
such that we can find another two sets of linearly independent vectors ${\bf u}_i$ and
${\bf w}_i$, $i=1,\cdots,q_2$ such that
\begin{equation}
{\bf v}_i={\bf H}_1{\bf u}_{i}={\bf H}_2{\bf w}_{i}\label{eq:16}
\end{equation}
\end{lemma}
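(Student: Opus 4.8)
\emph{Proof proposal.} The plan is to reduce both conditions to the elementary subspace dimension identity
\[
\dim\!\bigl(\mathrm{col}(\mathbf{H}_1)\cap\mathrm{col}(\mathbf{H}_2)\bigr)=\dim\mathrm{col}(\mathbf{H}_1)+\dim\mathrm{col}(\mathbf{H}_2)-\dim\!\bigl(\mathrm{col}(\mathbf{H}_1)+\mathrm{col}(\mathbf{H}_2)\bigr),
\]
and then to read the number of shared vectors off the right-hand side. Observe first that a vector $\mathbf{v}$ admits representations $\mathbf{v}=\mathbf{H}_1\mathbf{u}=\mathbf{H}_2\mathbf{w}$ for some $\mathbf{u},\mathbf{w}$ precisely when $\mathbf{v}\in\mathrm{col}(\mathbf{H}_1)\cap\mathrm{col}(\mathbf{H}_2)$, so the number of linearly independent $\mathbf{v}_i$ we can produce equals the dimension of this intersection.

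First I would record the two easy dimension counts. Since $\mathbf{H}_j$ has full rank almost surely, $\dim\mathrm{col}(\mathbf{H}_j)=\min\{q_j,p\}$: in Condition~1 this gives $q_1$ and $q_2$, while in Condition~2 it gives $p$ and $q_2$, so that $\mathrm{col}(\mathbf{H}_2)\subseteq\mathbb{C}^p=\mathrm{col}(\mathbf{H}_1)$. Next I would argue that $\dim\!\bigl(\mathrm{col}(\mathbf{H}_1)+\mathrm{col}(\mathbf{H}_2)\bigr)=\mathrm{rank}\,[\mathbf{H}_1\ \mathbf{H}_2]=\min\{q_1+q_2,p\}$ almost surely: the stacked $p\times(q_1+q_2)$ matrix $[\mathbf{H}_1\ \mathbf{H}_2]$ again has independent, continuously distributed entries, and rank deficiency is the simultaneous vanishing of all its maximal minors, a nontrivial polynomial condition and hence a probability-zero event. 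Substituting into the identity gives intersection dimension $q_1+q_2-p$ in Condition~1 (using $q_1+q_2>p$) and $p+q_2-p=q_2$ in Condition~2, which are exactly the claimed counts.

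It then remains only to exhibit the vectors. I would take $\{\mathbf{v}_i\}$ to be any basis of $\mathrm{col}(\mathbf{H}_1)\cap\mathrm{col}(\mathbf{H}_2)$; by definition of column space each $\mathbf{v}_i$ can be written as $\mathbf{H}_1\mathbf{u}_i$ and as $\mathbf{H}_2\mathbf{w}_i$, which is exactly equation~(\ref{eq:12}) (resp.\ (\ref{eq:16})). Linear independence of $\{\mathbf{u}_i\}$ is then automatic, since $\sum_i c_i\mathbf{u}_i=0$ forces $\sum_i c_i\mathbf{v}_i=\mathbf{H}_1\sum_i c_i\mathbf{u}_i=0$ and hence all $c_i=0$; the same argument applies to $\{\mathbf{w}_i\}$. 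For Condition~2 one may just as well take $\mathbf{w}_i$ equal to the standard basis vectors of $\mathbb{C}^{q_2}$ and $\mathbf{v}_i=\mathbf{H}_2\mathbf{w}_i$, since $\mathrm{col}(\mathbf{H}_2)$ already lies inside $\mathrm{col}(\mathbf{H}_1)$ and the $\mathbf{u}_i$ exist by surjectivity of $\mathbf{H}_1$ onto $\mathbb{C}^p$.

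The only genuinely nontrivial step — and the one I would be most careful about — is the almost-sure rank claim for the concatenation $[\mathbf{H}_1\ \mathbf{H}_2]$, i.e., that the two column spaces are in general position subject only to living in $\mathbb{C}^p$; everything else holds deterministically. In the paper's setting $\mathbf{H}_1$ and $\mathbf{H}_2$ are built from the mutually independent channel matrices with continuous entries, so this general-position property is inherited, and I would flag this explicitly as the place where the continuous-distribution assumption is used.
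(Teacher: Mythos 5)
Your proof is correct, and it reaches the same conclusions by a genuinely different (though ultimately equivalent) route. The paper recasts the defining relations ${\bf v}_i={\bf H}_1{\bf u}_i={\bf H}_2{\bf w}_i$ as a null-space problem for the $2p\times(p+q_1+q_2)$ block matrix $\left[\begin{smallmatrix}{\bf I} & {\bf H}_1 & {\bf 0}\\ {\bf I} & {\bf 0} & {\bf H}_2\end{smallmatrix}\right]$, reads the count $q_1+q_2-p$ off its nullity, and for Condition~2 subtracts the $q_1-p$ null vectors with ${\bf v}_i={\bf 0}$ (those arising from $\ker {\bf H}_1$) to arrive at $q_2$. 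You instead work directly with $\mathrm{col}({\bf H}_1)\cap\mathrm{col}({\bf H}_2)$ via the identity $\dim(U\cap V)=\dim U+\dim V-\dim(U+V)$. The two arguments hinge on the same generic fact: after row reduction the paper's block matrix has rank $p+\mathrm{rank}\,[{\bf H}_1\ {\bf H}_2]$, so its claimed nullity is precisely your claim that $\mathrm{rank}\,[{\bf H}_1\ {\bf H}_2]=\min\{q_1+q_2,p\}$ almost surely. Your presentation has two advantages: it isolates and explicitly justifies this genericity step (which the paper passes over with ``it is easy to see,'' and which requires slightly more than the stated hypothesis that each ${\bf H}_j$ individually be full rank --- you correctly note it is inherited from the independent, continuously distributed channel entries), and it renders Condition~2 immediate, since there $\mathrm{col}({\bf H}_1)=\mathbb{C}^p$ and the intersection is simply $\mathrm{col}({\bf H}_2)$. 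What the paper's formulation buys in return is a direct construction of the triples $({\bf v}_i,{\bf u}_i,{\bf w}_i)$ as a null-space basis, which is the form used in the beamforming design later on; your pull-back of a basis of the intersection, together with the injectivity argument showing the $\{{\bf u}_i\}$ and $\{{\bf w}_i\}$ are themselves linearly independent, supplies the same objects.
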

\begin{proof}
The proof is provided in Appendix \ref{sec:proof-lemma-reflma}.
\end{proof}
\begin{remark}
The result in Condition 1 is the same as the result in
\cite{signalspaceY}. For this case, the dimension of the shared signal
space decreases as the number of antennas at the relay increases. For
this reason, as we will see in the achievable DoF for the MIMO multi-way
relay channel, using only a subset of the antennas at the relay is sufficient.
\end{remark}
\begin{remark}
Condition 2 implies that, when there is asymmetry
between the number of antennas at the users, the dimension of shared signal space
between two users cannot exceed the dimension of the user with the
smallest number of antennas.
\end{remark}

\subsection{Two Users in Each Cluster: Transmission Strategies}
For the two-cluster multi-way relay channel with two users in each
cluster, we can show that the following DoF is achievable:
\begin{proposition}\label{Prop-2cluster-MW-Ach-DoF}

({\bf i}) {\bf When $N\le M_2^1+M_2^2$, the following DoF is achievable}:
\begin{itemize}
\item {\it Case 1:} $N\le M_2^1$, $DoF=2N$.
\item {\it Case 2:} $N>M_2^1\ge M_2^2$ \begin{itemize}
                     \item {\it Condition 1:} $N\le M_1^1$ and $N\le M_1^2$, $DoF = 2N$.
                     \item {\it Condition 2:} $M_1^1<N\le M_1^2$ \begin{enumerate}
                                          \item $M_1^1+M_2^1+M_2^2\ge 2N$,
                                            $DoF=2N$.
                                          \item $M_1^1+M_2^1+M_2^2<2N$,
                                            $DoF=M_1^1+M_2^1+M_2^2$.
                                          \end{enumerate}
                     \item {\it Condition 3:} $M_1^2<N\le M_1^1$ \begin{enumerate}
                                          \item $M_2^1+M_1^2+M_2^2\ge 2N$,
                                            $DoF=2N$.
                                          \item $M_2^1+M_1^2+M_2^2<2N$,
                                            $DoF=\max\{M_2^1+M_1^2+M_2^2,N+M_2^1\}$.
                                          \end{enumerate}
                     \item {\it Condition 4:} $N>M_1^1,N> M_1^2$ \begin{enumerate}
                                          \item $M_1^1+M_2^1+M_1^2+M_2^2\ge 3N$,
                                            $DoF=2N$.
                                          \item $M_1^1+M_2^1+M_1^2+M_2^2<3N$,
                                            \begin{align}
                                              DoF=\max&\Biggl\{(M_1^2+M_2^2-N)^++N,
                                              M_1^1+M_2^1,\min\Bigl\{\frac{2(M_1^1+M_2^1+M_1^2+M_2^2)}{3},\nonumber\\
                                              &\frac{4(M_1^1+M_2^1+M_2^2)-2M_1^2}{3},\frac{4(M_2^1+M_1^2+M_2^2)-2M_1^1}{3}\Bigr\}\Biggr\}.\label{eq:9}
                                            \end{align}
                                          \end{enumerate}

                     \end{itemize}
\end{itemize}

({\bf ii}) {\bf When $N> M_2^1+M_2^2$, the following DoF is achievable}:
\begin{itemize}
\item {\it Case 1:} $N\ge 2(M_2^1+M_2^2)$, $DoF=2(M_2^1+M_2^2)$.
\item {\it Case 2:} $N<2(M_2^1+M_2^2)\le 2(M_2^1+M_2^1) \le 2(M_1^1+M_2^1)$
                     \begin{itemize}
                     \item {\it Condition 1:} $N\le M_1^1$ and $N\le M_1^2$, $DoF = 2(M_2^1+M_2^2)$.
                     \item {\it Condition 2:} $M_1^1<N\le M_1^2$, which implies
                       $M_1^1+M_2^1+M_2^2<2N$,
                                          \begin{enumerate}
                                          \item $N\ge 2M_2^1+M_2^2$,
                                            $DoF=2(M_2^1+M_2^2)$.
                                          \item $M_1^1\ge
                                            M_2^1+M_2^2$,
                                            $DoF=2(M_2^1+M_2^2)$.
                                          \item $N< 2M_2^1+M_2^2$ and $M_1^1<
                                            M_2^1+M_2^2$, $DoF=\max
                                           \{N+M_2^2, M_1^1+M_2^1+M_2^2\}$.
                                          \end{enumerate}
                     \item {\it Condition 3:} $M_1^2<N\le M_1^1$, which implies
                       $M_2^1+M_1^2+M_2^2<2N$,
                                          \begin{enumerate}
                                          \item $N\ge M_2^1+2M_2^2$,
                                            $DoF=2(M_2^1+M_2^2)$.
                                          \item $M_1^2\ge
                                            M_2^1+M_2^2$,
                                            $DoF=2(M_2^1+M_2^2)$.
                                          \item $N< M_2^1+2M_2^2$ and $M_1^2<M_2^1+M_2^2$,
                                            $DoF=\max\{M_2^1+M_1^2+M_2^2,N+M_2^1\}$.
                                          \end{enumerate}
                     \item {\it Condition 4:} $N>M_1^1,N> M_1^2$, which implies
                       $M_1^1+M_2^1+M_1^2+M_2^2<3N$,
                                          \begin{enumerate}
                                            \item $M_1^1\ge
                                              M_2^1+M_2^2$ and
                                              $M_1^2\ge M_2^1+M_2^2$, $DoF=2(M_2^1+M_2^2)$.
                                            \item $M_1^2\ge
                                              2M_2^1+M_2^2$, $DoF=2(M_2^1+M_2^2)$.
                                            \item $M_1^1\ge
                                              M_2^1+2M_2^2$,
                                              $DoF=2(M_2^1+M_2^2)$.
                                            \item Otherwise,
                                              \begin{align}
                                              DoF=\max&\Biggl\{N,(M_1^1+M_2^1-N)^++N,(M_1^2+M_2^2-N)^++N,\nonumber\\
                                              &\min\Bigl\{\frac{2(M_1^1+M_2^1+M_1^2+M_2^2)}{3},
                                              \frac{4(M_1^1+M_2^1+M_2^2)-2M_1^2}{3},\nonumber\\
                                              &\frac{4(M_2^1+M_1^2+M_2^2)-2M_1^1}{3}\Bigr\}\Biggr\}.\label{eq:10}
                                            \end{align}
                                          \end{enumerate}

                     \end{itemize}
\end{itemize}
\end{proposition}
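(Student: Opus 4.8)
The plan is to establish Proposition~\ref{Prop-2cluster-MW-Ach-DoF} constructively: for each listed case I would exhibit an explicit two-phase linear scheme---a multiple-access uplink in which the four users transmit to the relay, followed by a broadcast downlink in which the relay re-encodes and retransmits---whose sum-DoF equals the stated value. Only two primitives are ever needed. First, \emph{signal space alignment} within a cluster: users $(1,l)$ and $(2,l)$ precode so that their streams land in a common subspace of the relay's receive space; the relay decodes the lattice/function sum of the two aligned codewords (functional decode-and-forward), re-encodes and broadcasts it, and each user cancels its own contribution. Second, \emph{multiple-access transmission}: a user's streams occupy a relay subspace linearly independent of every other active subspace, so the relay decodes that user's codeword directly. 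Lemma~\ref{lma} supplies precisely what the first primitive requires: if the relay uses $N'\le N$ of its antennas, users $(1,l)$ and $(2,l)$ can align $M_1^l+M_2^l-N'$ common dimensions when $N'\ge M_1^l$ and $M_2^l$ common dimensions when $M_1^l\ge N'\ge M_2^l$, with the precoders realizing the alignment given by $(\ref{eq:12})$ and $(\ref{eq:16})$. The Remark following Lemma~\ref{lma} is exactly why $N'$ must be kept as a free variable rather than fixed to $N$: when $N'\ge M_1^l$ the shared dimension shrinks as more relay antennas are switched on, so deliberately using a subset of them can strictly raise the DoF.

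The core of the argument is one parametrized scheme. Let cluster $l$ contribute $a_l$ aligned streams (each of its two users transmitting $a_l$), let user $(j,l)$ additionally send $b_j^l$ private streams by multiple access, and let the relay use $N'\le N$ antennas; the resulting sum-DoF is $\sum_l(2a_l+b_1^l+b_2^l)$. Feasibility reduces to three groups of linear constraints: (i) the alignment bounds of Lemma~\ref{lma} for each cluster, together with the per-user uplink-antenna limits on aligned-plus-private streams; (ii) joint linear independence at the relay of all aligned and all private subspaces, which holds almost surely once $\sum_l(a_l+b_1^l+b_2^l)\le N'$ because the channel entries are drawn from a continuous distribution; and (iii) downlink decodability at every user $(j,l)$, i.e.\ it can extract the streams addressed to it while being free of the other cluster's streams---achieved by receive zero-forcing at the user when the user is antenna-rich and by transmit zero-forcing at the relay (steering the interfering cluster into the $(N'-M_j^l)$-dimensional blind spot of ${\bf H}_{(j,l)R}$) when it is antenna-poor. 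Because (iii) is \emph{disjunctive}, the closed forms in $(\ref{eq:9})$ and $(\ref{eq:10})$ come out as maxima of several expressions: the branches of the zero-forcing choice define different feasible polytopes, and within a fixed regime (which of $N,M_1^1,M_1^2$ dominates, whether $N\le M_2^1+M_2^2$, and so on) maximizing the linear objective over the relevant polytope is a small linear program whose optimal-vertex value is the claimed DoF, the denominators $3$ being a consequence of the binding constraints at those vertices.

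With this machinery, the individual cases are bookkeeping. In the ``clean'' cases a pure strategy already attains the value: when $N\le M_2^1$ a single aligned cluster with $a_1=N$ gives $2N$; in Condition~1 and the first alternatives of Conditions~2--4 of part (i) a pure signal-space-alignment split with $a_1+a_2=N$---feasible since $N\le M_2^1+M_2^2$ and each $a_l\le M_2^l$ by Condition~2 of Lemma~\ref{lma}---gives $2N$; in part (ii) the users $(2,l)$ are the bottleneck, each transmitting at most $M_2^l$ streams, and when $N$ is large enough pure multiple access yields $2(M_2^1+M_2^2)$, the extra hypotheses listed in the sub-cases of Conditions~2--4 of (ii) ($N\ge 2M_2^1+M_2^2$, $M_1^1\ge M_2^1+M_2^2$, and the like) being precisely what makes one of the pure configurations feasible there. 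The genuinely hybrid instances are item~2 of Condition~4 in part (i) and item~4 of Condition~4 in part (ii): no pure configuration is feasible, and one verifies that each competing expression in $(\ref{eq:9})$ and $(\ref{eq:10})$---the linear terms $M_1^1+M_2^1$ and $(M_1^2+M_2^2-N)^++N$ as well as the three ``$/3$'' terms inside the $\min$---is the value of $\sum_l(2a_l+b_1^l+b_2^l)$ at a vertex of the feasible region where a specific subset of the constraints (i), (ii), (iii) binds, with $N'$ chosen accordingly.

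The step I expect to be the main obstacle is establishing downlink feasibility (iii) and its coupling to the uplink: one must show that the relay, with only $N'$ antennas, can beamform each forwarded sum toward both users of its cluster while simultaneously steering each cluster's streams into the null spaces of the \emph{other} cluster's users, and that this is consistent with the uplink alignment precoders forced by Lemma~\ref{lma}. Showing that the resulting system of subspace equalities and inclusions is generically solvable---not merely counting dimensions---is where the genericity of the channel matrices has to be invoked with care. That joint uplink/downlink consistency, compounded by the need to identify the correct optimal vertex of the linear program in each of the many regimes, is what makes the case analysis long; the alignment feasibility itself is immediate from Lemma~\ref{lma}, and the sum-DoF accounting, once a scheme is fixed, is routine.
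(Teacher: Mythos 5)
Your proposal matches the paper's proof in all essentials: the same two primitives (signal space alignment with functional decode-and-forward of the aligned sum, and multiple-access transmission), the same use of Lemma~\ref{lma} with a tunable number $N'\le N$ of active relay antennas to keep the shared subspace from shrinking, and the same regime-by-regime optimization of the aligned/private stream split, which the paper carries out by explicit case enumeration rather than your linear-program packaging. The obstacle you flag---downlink consistency---is resolved in the paper by observing that the receive-filter design in (\ref{eq:7})--(\ref{eq:8}) is the transpose of the uplink alignment problem, so Lemma~\ref{lma} applies verbatim and the relay simply zero-forces onto the resulting effective channels, while fractional $N'$ is handled by symbol extension.
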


\begin{proof}
We next provide detailed transmission schemes to show how the above
DoF can be achieved and identify scenarios when the DoF upper bound
can be achieved.

{\bf (i)} {\bf When $N\le M_2^1+M_2^2$}:

Under this condition, the DoF upper bound in equation \eqref{eq:4} reduces to
\begin{equation}
DoF\le 2N.
\end{equation} We further consider the following cases:

{\it Case 1: $N\le M_2^1$:} This condition corresponds to the case when the
  relay always has less antennas than both users in at least one of
  the two clusters. The DoF $2N$ can be achieved by only allowing the
  users in the cluster with more antennas than the relay to exchange information,
  which yields a two-way relay channel. Since both users have more
  antennas than the relay, they can perfectly align $N$ independent
  data streams at the relay. The
  functional-decode-and-forward (FDF) strategy can thus achieve the
  DoF upper bound $2N$.

{\it Case 2: $N>M_2^1\ge M_2^2$:} This condition corresponds to the
case when the relay has more antennas than at least one users in each
  cluster. A single pair of users thus cannot perfectly align $N$ independent data
  streams at the relay. However, it is still possible to achieve the
  optimal DoF by allowing two clusters of users to use the relay at
  the same time. Depending on the number of antennas
  at the relay and the users, we further consider the
  following conditions:

  {\it Condition 1: $N\le M_1^1$ and $N\le M_1^2$.} For this case, one of the
    users in each cluster has more antennas than the relay. From {Condition 2} in {\it
    Lemma \ref{lma}}, if we set ${\bf H}_{1}={\bf H}_{R(1,1)}$ and ${\bf
    H}_{2}={\bf H}_{R(2,1)}$, we can see that for user 1 and user 2 in
    cluster 1, they can find $M_2^1$ non-zero linearly independent
    vectors ${\bf q}_{1i}, {\bf v}_{(1,1)i}$ and ${\bf v}_{(2,1)i}$, $i=1,\cdots,M_2^1$ for
    cluster 1 such that
    \begin{equation}
      {\bf H}_{R(1,1)}{\bf v}_{(1,1)i}={\bf H}_{R(2,1)}{\bf v}_{(2,1)i}={\bf q}_{1i} \label{eq:5}.
    \end{equation}
    This means that user 1 and user 2 can share $M_2^1$ dimensional
    space at the relay. Following the same argument, we can see that
    user 1 and user 2 in cluster 2 share $M_2^2$ dimensional
    space at the relay, i.e., they can find $M_2^2$ non-zero linearly
    independent vectors ${\bf q}_{2i}, {\bf v}_{(1,2)i}$ and ${\bf
      v}_{(2,2)i}$ such that
    \begin{equation}
      {\bf H}_{R(1,2)}{\bf v}_{(1,2)i}={\bf H}_{R(2,2)}{\bf
        v}_{(2,2)i}={\bf q}_{2i} \label{eq:6}.
    \end{equation}

    Since we have $M_2^1+M_2^2\ge N$, the users in cluster 1 can
    choose $M_2^{1\prime}$ vectors out of the vectors ${\bf q}_{1i}$,
    and the users in cluster 2 can choose $M_2^{2\prime}$ vectors out of the vectors ${\bf q}_{2i}$, such that
    these vectors are linearly independent almost surely and
    $M_2^{1\prime}+M_2^{2\prime}=N$, as their target signal directions
    at the relay. We denote the set of vectors chosen by cluster 1 as
    $\mathcal{Q}_1$ and the set of vectors chosen by cluster 2 as $\mathcal{Q}_2$.

    Based on the above analysis, we can construct the transmission scheme as
    follows: User $1$ and user $2$ in cluster 1 send $M_2^{1\prime}$ independent data
    streams $d_{1i}^1$ and $d_{2i}^1$ along the directions ${\bf v}_{(1,1)i}$ and ${\bf v}_{(2,1)i}$,
    respectively. User $1$ and user $2$ in cluster 2 send $M_2^{2\prime}$ independent data
    streams $d_{1i}^2$ and $d_{2i}^2$ along the directions ${\bf v}_{(1,2)i}$ and ${\bf v}_{(2,2)i}$,
    respectively. We have
    \begin{equation}
      {\bf X}_{k,1}=\sum_{i\in \mathcal{Q}_1}{\bf v}_{(k,1)i}d_{ki}^1, k=1,2,
    \end{equation}
    \begin{equation}
      {\bf X}_{k,2}=\sum_{i\in \mathcal{Q}_2}{\bf v}_{(k,2)i}d_{ki}^2, k=1,2.
    \end{equation}

    The received signal at the relay is
    \begin{align}
      {\bf Y}_R&=\sum_{k=1}^2{\bf H}_{R(k,1)}{\bf X}_{k,1}+\sum_{k=1}^2{\bf H}_{R(k,2)}{\bf X}_{k,2}\\
      &=\sum_{i\in \mathcal{Q}_1}{\bf
        q}_{1i}(d_{1i}^1+d_{2i}^1)+\sum_{i\in \mathcal{Q}_2}{\bf q}_{2i}(d_{1i}^2+d_{2i}^2)
    \end{align}

    The relay can then decode $d_{1i}^1+d_{2i}^1$ and
    $d_{1i}^2+d_{2i}^2$ using zero forcing.

    The relay now needs to transmit $d_{1i}^1+d_{2i}^1$ to user 1 and
    user 2 in cluster 1
    and also transmit $d_{1i}^2+d_{2i}^2$ to user 1 and
    user 2 in cluster 2. For this end, we
    let the users apply a receiver-side filter ${\bf u}_{(k,l)i}$
    such that
    \begin{equation}
      ({\bf u}_{(1,1)i})^T{\bf H}_{(1,1)R}=({\bf u}_{(2,1)i})^T{\bf H}_{(2,1)R}={\bf g}_{1i}^T,\label{eq:7}
    \end{equation}
    \begin{equation}
      ({\bf u}_{(1,2)i})^T{\bf H}_{(1,2)R}=({\bf u}_{(2,2)i})^T{\bf H}_{(2,2)R}={\bf g}_{2i}^T,\label{eq:8}
    \end{equation} which makes the users in one cluster appear to be the
    same user to the relay.

    Taking transpose of equations (\ref{eq:7}) and (\ref{eq:8}), we
    can see that the problem of finding the vectors ${\bf u}_{(k,l)i}$
    are the same problem as finding the vectors ${\bf
      v}_{(k,l)i}$. Therefore the users in cluster $l$ can find $M_2^l$
    such triplets of non-zero linearly independent vectors
    \begin{equation}
      ({\bf u}_{(1,l)i},{\bf u}_{(2,l)i},{\bf g}_{li}).
    \end{equation} The users in cluster $1$ can then choose
    $M_2^{1\prime}$ vectors ${\bf g}_{1i}$ and the users in cluster
    $2$ can choose $M_2^{2\prime}$ vectors ${\bf g}_{2i}$, such that
    they are all linearly independent, as their target directions to
    receive signals transmitted from the relay. Using these chosen
    vectors, user $k$ in cluster $l$ can form a beamforming matrix
    ${\bf U}_{k,l}$, which has the chosen ${\bf u}_{(k,l)i}$ vectors
    as its rows, and apply it to the received signals:
    \begin{align}
      {\bf Y}_{k,l}^\prime
      &={\bf U}_{k,l}{\bf Y}_{k,l}\\
      &={\bf U}_{k,l}{\bf H}_{(k,l)R}{\bf X}_R+{\bf U}_{k,l}{\bf
        Z}_{k,l}\\
      &={\bf G}_l{\bf X}_R+{\bf U}_{k,l}{\bf
        Z}_{k,l}
    \end{align} where the matrix ${\bf G}_l$ is of dimension
    $M_2^{l\prime}\times N$ and has the chosen vectors ${\bf g}_{li}$
    as its rows. The relay can use zero-forcing precoding to
    communicate  $d_{1i}^1+d_{2i}^1$ and $d_{1i}^2+d_{2i}^2$ to the
    intended users. The users can now subtract their own side
    information from the received signals to decode the intended
    messages. Therefore the DoF $2N$ is achievable.

  {\it Condition 2: $M_1^1<N\le M_1^2$.} For this case, cluster 2 has a
    user with more antennas than the relay while both users in cluster
    1 have less antennas than the relay. From Condition 1
    in {\it Lemma \ref{lma}}, the users in cluster 1 can share
    $M_1^1+M_2^1-N$ dimensional signal space at the relay, and from
    Condition 2 in {\it Lemma \ref{lma}}, the
    users in cluster 2 can share $M_2^2$ dimensional signal space at
    the relay. Note that since $N\le M_2^1+M_2^2$ and we assume
    $M_2^1\ge M_2^2$, we have $N\le
    M_1^1+M_2^1$, i.e., $M_1^1+M_2^1-N$ is always greater than
    zero. This leads to the following two cases that we need to investigate:

        $1)$ $M_1^1+M_2^1+M_2^2\ge 2N$. For this case, the total
        dimension of the shared signal space for the two clusters
        exceeds the available dimension available at the relay. The
        transmission scheme for the case $N\le M_1^1$ and $N\le M_1^2$
        can be used to achieve the DoF $2N$. Note that for this case,
        user 1 in cluster 2 has more antennas than the relay, and
        therefore it can send signals targeted at any signal
        dimension at the relay. User 2 in cluster 2 can transmit its
        data streams using some random beamforming vectors, and user 1
        in cluster 2 can control the direction of its transmitted
        data streams such that they arrive aligned with the data
        streams sent by user 2 in cluster 2. Users in cluster 1, on
        the other hand, need to design their beamforming vectors
        jointly such that their data streams are aligned at the
        relay. The received data streams from cluster 1 and cluster
        2 are linearly independent at the relay almost surely since
        the channel matrices are generated from a continuous
        distribution. The relay then decodes the sum of the messages
        from each clusters, and broadcasts the messages back to the
        intended clusters with proper receiver-side filtering at the
        users. The detailed scheme is similar to the previous case and
        is thus omitted.

        $2)$ $M_1^1+M_2^1+M_2^2< 2N$. Under this condition, the signal space
        available at the relay cannot be fully utilized by the two
        clusters, because the total dimension of shared signal space
        for the two clusters is $M_1^1+M_2^1-N+M_2^2$, which is
        smaller than $N$. Therefore the DoF upper bound $2N$ cannot be
        achieved using signal space alignment.

        For this case, if $\frac{M_1^1+M_2^1+M_2^2}{2}$ is
        an integer, we can let the relay to use
        $N^\prime=\frac{M_1^1+M_2^1+M_2^2}{2}$ antennas to assist the
        users. It is easy to see that $N^\prime\ge M_2^1$,
        $M_1^2\ge N^\prime\ge M_2^2$, and $N^\prime\ge M_1^1$ since $M_1^1<N\le
        M_2^1+M_2^2$. By using only a subset of the antennas at the
        relay, users in cluster 1 can still share $M_1^1+M_2^1-N$
        dimensional space and users in cluster 2 can still share $M_2^2$
        dimensional space. Since we also have $M_1^1+M_2^1-N+M_2^2=
        N^\prime$, using the schemes described in
        the previous part, we can achieve the DoF
        $M_1^1+M_2^1+M_2^2$.

        If $\frac{M_1^1+M_2^1+M_2^2}{2}$ is not an integer, we can use
        a two-symbol extension to create an effectively two-cluster
        MIMO multi-way relay channel with $2M_1^1,2M_2^1,2M_1^2,2M_2^2,2N$
        antennas at the users and the relay, respectively, and using
        the same argument as in the case when
        $\frac{M_1^1+M_2^1+M_2^2}{2}$ is an integer, we can achieve
        the DoF $M_1^1+M_2^1+M_2^2$ per channel use.
\begin{remark}\label{remarkon_ssa_relayantennas}
        Note that under this condition $M_1^1+M_2^1+M_2^2<2N$, an alternative scheme is to
        let the relay use $N$ antennas to assist the users. The
        users in cluster 2 can still share the $M_2^2$ dimensional
        signal space at the relay. The users in cluster 1 can
        use the shared $M_1^1+M_2^1-N$ dimensional space for signal
        space alignment, which
        yields an achievable DoF $2(M_1^1+M_2^1+M_2^2)-2N$ or use
        the rest $N-M_2^2\le M_2^1$ dimensional space in the multiple-access
        fashion, which yields an achievable DoF $N+M_2^2$. It is
        easy to see that $M_1^1+M_2^1+M_2^2$, which is achieved by
        using a subset of antennas at the relay, is the largest
        achievable DoF. This is because using more antennas at the
        relay decreases the number of dimension that can be shared by
        users using signal space alignment. The additional signal
        space, on the other hand, can only be used by
        a single user if signal space alignment is not used. Adding one
        antenna at the relay sacrifices two signal bits but only
        obtains one signal bit in return.
\end{remark}

{\it Condition 3: $M_1^2<N\le M_1^1$.} Based on {\it Lemma \ref{lma}}, the users in
cluster 1 share a $M_2^1$ dimensional signal space at the
relay and the users in cluster 2 share a
$M_1^2+M_2^2-N$ dimensional signal space at the relay. Different from
the case when $M_1^1<N\le M_1^2$, for users in
cluster 2, we cannot guarantee that $M_1^2+M_2^2-N$
is always positive. We further investigate the following cases:

$1)$ $M_2^1+M_1^2+M_2^2\ge 2N$. For this case, the total dimension of the shared
signal space of the two clusters exceeds the available dimension of
the signal space at the relay. The DoF $2N$ can thus be achieved using
signal space alignment, as described in the scheme for {\it Case 2 -
  Condition 2.(1), $N\le M_2^1+M_2^2$}.

$2)$ $M_2^1+M_1^2+M_2^2< 2N$. This condition implies that
\begin{equation}
M_1^2+M_2^2-N<\frac{M_1^2+M_2^2-M_2^1}{2}.
\end{equation}

When $M_1^2+M_2^2\le M_2^1$, we have $M_1^2+M_2^2-N<0$,
i.e., users in cluster 2 cannot share any signal space at the
relay. Therefore we let users in cluster 1 use the shared $M_2^1$
dimensional signal space to perform signal space alignment, and let
the users in cluster 2 use the rest $N-M_2^1$ dimensional signal space at the relay in the
multiple-access fashion. After decoding the sum of the messages from
cluster 1 and the individual messages from cluster 2, the relay can
then use zero-forcing precoding to broadcast the messages to the intended users with proper
receiver-side filtering at users in cluster 1. Using this scheme,
users in cluster 1 can exchange $2M_2^1$ messages and the users in
cluster 2 can exchange $N-M_2^1$ messages. We can achieve
DoF $N+M_2^1$.

When $M_1^2+M_2^2>M_2^1$, $M_1^2+M_2^2-N$ can be positive. For this
case, we can let the relay use $N^\prime=\frac{M_2^1+M_1^2+M_2^2}{2}$ antennas
to assist the users. Since we have $M_2^1+M_2^2\ge N > M_1^2$,
$N^\prime\ge M_1^2\ge M_2^2$. We also have $M_1^1>N^\prime>M_2^1$. Following the results in {\it Case 2 -
  Condition 2.(2), $N\le M_2^1+M_2^2$}, we can achieve the DoF
$M_2^1+M_1^2+M_2^2$.

We can also let the relay use all the antennas
to assist the users. If we allow the users in both clusters to use
signal space alignment, the achievable DoF is
$2(M_2^1+M_1^2+M_2^2)-2N$. It is easy to see that this achievable DoF is always
smaller than $M_2^1+M_1^2+M_2^2$ under the condition
$M_2^1+M_1^2+M_2^2<2N$. We can also let the users in cluster 1
use signal space alignment, but the users in cluster 2 use the relay in the
multiple-access fashion. This yields the achievable DoF $N+M_2^1$.

{\it Condition 4: $N>M_1^1$ and $N>M_1^2$.} Based on {\it Lemma
  \ref{lma}}, users in cluster 1 share a
$M_1^1+M_2^1-N$ dimensional signal space, and users in cluster 2 share
a $M_1^2+M_2^2-N$ dimensional signal space at the relay. Note that we always have
$M_1^1+M_2^1-N>0$ for $N\le M_2^1+M_2^2$. We further investigate the
following cases:

$1)$ $M_1^1+M_2^1+M_1^2+M_2^2\ge 3N$. For this case, the total dimension of
shared signal space for the two clusters exceeds the available signal
space at the relay. Both clusters can use signal space alignment to
achieve the DoF upper bound $2N$. The scheme can be designed in the
same fashion as in previous cases and the details are thus omitted.

$2)$ $M_1^1+M_2^1+M_1^2+M_2^2< 3N$. This condition implies that
\begin{equation}
M_1^2+M_2^2-N<\frac{2(M_1^2+M_2^2)-(M_1^1+M_2^1)}{3}.
\end{equation}

When $2(M_1^2+M_2^2)\le M_1^1+M_2^1$, $M_1^2+M_2^2-N$ is always less
than zero, i.e., there is no shared signal space at the relay for the
users in cluster 2. For this case, we let the relay use all the
antennas to assist the users. Users in cluster 1 can always share
the $M_1^1+M_2^1-N$ dimensional signal space at the relay. The
users in cluster 2 use the relay in the multiple-access fashion. This
yields the achievable DoF
\begin{equation}
2(M_1^1+M_2^1-N)+N-(M_1^1+M_2^1-N)=M_1^1+M_2^1.
\end{equation}

When $2(M_1^2+M_2^2)> M_1^1+M_2^1$, $M_1^2+M_2^2-N$ can be positive. For this case, we let
the relay use only $N^\prime=\frac{M_1^1+M_2^1+M_1^2+M_2^2}{3}$ antennas to
assist the users, if $\frac{M_1^1+M_2^1+M_1^2+M_2^2}{3}$ is an
integer. The case when $\frac{M_1^1+M_2^1+M_1^2+M_2^2}{3}$ is not an
integer can be addressed using symbol extension. It is easy to see that $N^\prime>M_2^1$ and
$N^\prime>M_2^2$. However, the relation between
$\frac{M_1^1+M_2^1+M_1^2+M_2^2}{3}$ and $M_1^1$ depends on
the relation between $M_2^1+M_1^2+M_2^2$ and $2M_1^1$; the relation between
$\frac{M_1^1+M_2^1+M_1^2+M_2^2}{3}$ and $M_1^2$ depends on
the relation between $M_1^1+M_2^1+M_2^2$ and $2M_1^2$:
\begin{itemize}
\item $M_2^1+M_1^2+M_2^2\ge 2M_1^1$ and $M_1^1+M_2^1+M_2^2\ge 2M_1^2$:
  For this case, users in cluster 1 share
  $\frac{2(M_1^1+M_2^1)-(M_1^2+M_2^2)}{3}$ dimensional signal space
  and users in cluster 2 share
  $\frac{2(M_1^2+M_2^2)-(M_1^1+M_2^1)}{3}$ dimensional signal
  space. The achievable DoF is $\frac{2(M_1^1+M_2^1+M_1^2+M_2^2)}{3}$.
\item $M_2^1+M_1^2+M_2^2\ge 2M_1^1$ and $M_1^1+M_2^1+M_2^2<2M_1^2$:
  For this case, users in cluster 1 share
  $\frac{2(M_1^1+M_2^1)-(M_1^2+M_2^2)}{3}$ dimensional signal space
  and users in cluster 2 share
  $M_2^2$ dimensional signal
  space. The achievable DoF is
  $\frac{4(M_1^1+M_2^1+M_2^2)-2M_1^2}{3}$.
\item $M_2^1+M_1^2+M_2^2< 2M_1^1$ and $M_1^1+M_2^1+M_2^2\ge 2M_1^2$:
  For this case, users in cluster 1 share
  $M_2^1$ dimensional signal space
  and users in cluster 2 share
  $\frac{2(M_1^2+M_2^2)-(M_1^1+M_2^1)}{3}$ dimensional signal
  space. The achievable DoF is
  $\frac{4(M_2^1+M_1^2+M_2^2)-2M_1^1}{3}$.
\item $M_2^1+M_1^2+M_2^2< 2M_1^1$ and $M_1^1+M_2^1+M_2^2<2M_1^2$: This
  case is not possible since the first condition implies $M_1^1>M_1^2$
  and the second condition implies $M_1^1<M_1^2$.
\end{itemize}

From the above cases, we can see that the achievable DoF is
\begin{equation}
\min\left\{\frac{2(M_1^1+M_2^1+M_1^2+M_2^2)}{3},\frac{4(M_1^1+M_2^1+M_2^2)-2M_1^2}{3},\frac{4(M_2^1+M_1^2+M_2^2)-2M_1^1}{3}\right\}
\end{equation}

Note that we can also let relay use all the antennas to assist the
users. We only allow cluster 1 to use signal space alignment, and let users in
cluster 2 use the relay in the multiple-access fashion. This yields
the achievable DoF $M_1^1+M_2^1$. In addition, we can also let
cluster 1 use the relay in the multiple-access fashion, and cluster 2
use signal space alignment. This yields the achievable DoF
$(M_1^2+M_2^2-N)^++N$. Combining both achievable DoF, we
have the desired result in equation \eqref{eq:9}.
\begin{remark}
Note that we can also use multiple-access transmission for both
clusters. However, the achievable DoF $N$ is always less than
$M_1^1+M_2^1$.
\end{remark}
\begin{remark}
If we let the relay use all the
antennas and use signal space alignment, the achievable DoF
$2(M_1^1+M_2^1-N)+2(M_1^2+M_2^2-N)^+$ is also smaller than the
achievable DoF in \eqref{eq:9}. Similar to {\it Remark
  \ref{remarkon_ssa_relayantennas}}, this is because when using signal
space alignment, increase the number of antennas at the relay will
decrease the dimension of shared signal space for the users. Using too
many antennas at the relay will reduce the dimension of shared signal
space and result in some unused signal space,
when only signal space alignment is used. It is always more desirable to
use the exact number of antennas at the relay such that all spatial
dimension is occupied for signal space alignment.
\end{remark}

({\bf ii}) {\bf When $N > M_2^1+M_2^2$}: 

Under this setting, the DoF upper bound in equation \eqref{eq:4} reduces to
\begin{equation}
DoF\le 2(M_2^1+M_2^2).
\end{equation}

{\it Case 1: $N\ge 2(M_2^1+M_2^2)$}

The DoF upper bound can be easily achieved for this case since the
relay has enough antennas to perform zero-forcing decoding and
precoding. Since we have $M_1^1\ge M_2^1$, $M_1^2\ge M_2^2$, we can
let user 1 use only $M_2^1$ of its antennas and let user 3 use only
$M_2^2$ of its antennas to transmit. The relay can decode all the messages and
broadcast the messages to the intended users since it has sufficient
spatial dimension.

{\it Case 2: $N<2(M_2^1+M_2^2)$}

For this case, we also have that $N\le 2(M_2^1+M_2^1)\le
2(M_1^1+M_2^1)$. Depending on the number of antennas at the
users and the relay, we need to further consider the following
conditions:

{\it Condition 1:} $N\le M_1^1$ and $N\le M_1^2$. From Condition 2 in
{\it Lemma \ref{lma}}, the users in cluster 1 share $M_2^1$
dimensional signal space and the users in cluster 2 share $M_2^2$
dimensional space at the relay. Since we also have
$N>M_2^1+M_2^2$, the users in each cluster can fully utilize their
shared signal space at the relay to exchange messages. Specifically,
users in cluster 1 and cluster 2 can transmit $M_2^1$ and $M_2^2$ data streams such that they are
aligned at the relay, respectively. The relay decodes the sum of the messages and
broadcast back to the intended clusters with proper receiver-side
processing at the users. The DoF upper bound $2(M_2^1+M_2^2)$ can be
achieved. The detailed scheme is similar to the previous cases and
thus is omitted.

{\it Condition 2:} $M_1^1<N\le M_1^2$. From {\it Lemma \ref{lma}},
users in cluster 1 share $(M_1^1+M_2^1-N)^+$ dimensional signal space,
while the users in cluster 2 share $M_2^2$ dimensional signal
space. Under this condition, we have $M_1^1+M_2^1-N<M_2^1$. Therefore
using all the antennas at the relay and signal space alignment at two
clusters cannot achieve the optimal DoF. In addition, since we have $N > M_2^1+M_2^2$, $M_1^1<N$ implies that
$M_1^1+M_2^1+M_2^2<2N$, and thus the total dimension of shared signal
space for the two clusters is less than $N$. Note that we do not have
$M_1^1+M_2^1-N\ge 0$ for $N>M_2^1+M_2^2$, which is different from the
case when $N\le M_2^1+M_2^2$.

$1)$ We first consider a scheme that
allows the users in cluster 1 use the relay in the multiple-access fashion, and we also let the users in cluster 2 to use the shared $M_2^2$
dimensional space to perform signal space alignment. The
dimension of signal space available for cluster 1 is $N-M_2^2$. As
long as $N-M_2^2\ge 2M_2^1$, the users in cluster 1 can still exchange
a total of $2M_2^1$ messages using multiple-access type of
schemes. The users in cluster 2 can always exchange $2M_2^2$ messages
using signal space alignment. Therefore, when $N\ge 2M_2^1+M_2^2$, we can still achieve the
DoF upper bound $2(M_2^1+M_2^2)$.

$2)$ We next only allow the relay to use a subset of the antennas to
assist the users. Specifically, if we have $M_1^1\ge M_2^1+M_2^2$, we
can let the relay use exactly $M_2^1+M_2^2$ antennas. From the result
in {\it Case 2 - Condition 1, $N\le M_2^1+M_2^2$}, we can achieve
DoF $2(M_2^1+M_2^2)$, which matches the upper bound.

$3)$ When the conditions in the above cases are not satisfied, the scheme used achieves the DoF
\begin{equation}
DoF=\max\{N+M_2^2,M_1^1+M_2^1+M_2^2\}
\end{equation}
\begin{remark}
The achievable schemes for the optimal DoF under this condition
imply that using signal space alignment to let the users share the
signal space of the relay is not always the optimal
approach. Depending on the number of antennas at the users and the
relay, multiple-access transmission or a combination of both can be
more beneficial.
\end{remark}

{\it Condition 3:} $M_1^2<N\le M_1^1$. The result for this case can be
obtained following similar arguments from {\it Case 2 - Condition 2,
  $N>M_2^1+M_2^2$} and the details are omitted.

{\it Condition 4:} $N>M_1^1$ and $N>M_1^2$. Under this condition, we
consider the following cases:

  $1)$ $M_1^1\ge M_2^1+M_2^2$ and $M_1^2\ge M_2^1+M_2^2$: For this
  case, we have $N>M_2^1+M_2^2$. We can let the relay use only
  $M_2^1+M_2^2$ antennas to assist the users. This case is then
  reduced to {\it Case 2 - Condition 2.(2), $N\le M_2^1+M_2^2$} or {\it Case 2 - Condition
    3.(2), $N\le M_2^1+M_2^2$}. The optimal DoF $2(M_2^1+M_2^2)$ can thus be achieved.

  $2)$ $M_1^2\ge 2M_2^1+M_2^2$: For this case, we have
  $N>2M_2^1+M_2^2$. We can let the relay use only $2M_2^1+M_2^2$
  antennas to assist the users. Based on {\it Lemma
    \ref{lma}}, the users in cluster 2 share $M_2^2$ dimensional
  signal space at the relay. We let the users in cluster 2 to use
  signal space alignment to exchange $2M_2^2$ messages using $M_2^2$
  dimensional space, while the users
  in cluster 1 use the rest $2M_2^1$ dimensional space at the relay in the multiple-access fashion to
  exchange $2M_2^1$ messages. The achieved DoF is thus $2(M_2^1+M_2^2)$,
  which matches the upper bound.

  $3)$ $M_1^1\ge M_2^1+2M_2^2$: For this case, the optimal DoF
  $2(M_2^1+M_2^2)$ can be achieved following the same argument as in
  the previous case when $M_1^2\ge 2M_2^1+M_2^2$.

  $4)$ For the other cases, we can always achieve the DoF $N$ by
  letting all the users transmit the data streams to the relay in the
  multiple-access fashion, and relay decodes all data streams and
  broadcasts back to the users. We can also let one cluster of users
  use signal space alignment, and the other cluster of users use
  multiple-access transmission. For example, we can let cluster 1 use
  $(M_1^1+M_2^1-N)^+$ dimensional space to perform signal space
  alignment, and cluster 2 use the rest $N-(M_1^1+M_2^1-N)^+$
  dimensional space to perform multiple-access transmission. The DoF
  $(M_1^1+M_2^1-N)^++N$ can thus be achieved. The DoF
  $(M_1^1+M_2^1-N)^++N$ can be achieved in a similar fashion. The last
  term in equation \eqref{eq:10} can be achieved by
  using $\frac{M_1^1+M_2^1+M_1^2+M_2^2}{3}$ antennas at the relay, and
  the analysis is similar to {\it Case 2 - Condition 4.(2), $N\le
    M_2^1+M_2^2$} and the details are omitted.
\end{proof}

\subsection{Two Users in Each Cluster: Optimal DoF}
We now have presented a set of achievable DoF for the general
two-cluster MIMO multi-way relay channel with two users in each
cluster. With the achievable DoF in {\it Proposition
  \ref{Prop-2cluster-MW-Ach-DoF}} and the DoF upper bound in {\it
  Theorem \ref{thm-DoF-upperbound}}, we can establish the optimal DoF.

\begin{theorem}\label{thm-2cluster-2user}
Consider the two-cluster MIMO multi-way relay channel with two users in each
cluster as described in Section \ref{sec:system-model}. The optimal DoF is described as follows:

{\it (1) $DoF^*=2N$}
\begin{itemize}
\item When $N\le M_2^1$, or
\item When $M_2^1+M_2^2 \ge N>M_2^1\ge M_2^2$ and one of the following conditions is satisfied:
                     \begin{itemize}
                     \item  $N\le M_1^1$ and $N\le M_1^2$.
                     \item  $M_1^1<N\le M_1^2$ and
                                          $M_1^1+M_2^1+M_2^2\ge 2N$.

                     \item  $M_1^2<N\le M_1^1$ and $M_2^1+M_1^2+M_2^2\ge 2N$.

                     \item  $N>M_1^1,N> M_1^2$ and $M_1^1+M_2^1+M_1^2+M_2^2\ge 3N$.
                     \end{itemize}
\end{itemize}

{\it (2) $DoF^*=2(M_2^1+M_2^2)$}
\begin{itemize}
\item When $N\ge 2(M_2^1+M_2^2)$, or
\item When $M_2^1+M_2^2<N<2(M_2^1+M_2^2)$ and one of the following
  conditions is satisfied:
                     \begin{itemize}
                     \item $N\le M_1^1$, $N\le M_1^2$.
                     \item $M_1^1<N\le M_1^2$ and
                                          \begin{itemize}
                                          \item $N\ge 2M_2^1+M_2^2$ or
                                          \item $M_1^1\ge M_2^1+M_2^2$.
                                          \end{itemize}
                     \item $M_1^2<N\le M_1^1$ and
                                          \begin{itemize}
                                          \item $N\ge M_2^1+2M_2^2$ or
                                          \item $M_1^2\ge
                                            M_2^1+M_2^2$.
                                          \end{itemize}
                     \item $N>M_1^1,N> M_1^2$ and
                                          \begin{itemize}
                                            \item $M_1^1\ge
                                              M_2^1+M_2^2$,
                                              $M_1^2\ge M_2^1+M_2^2$ or
                                            \item $M_1^2\ge
                                              2M_2^1+M_2^2$ or
                                            \item $M_1^1\ge
                                              M_2^1+2M_2^2$.
                                          \end{itemize}

                     \end{itemize}
\end{itemize}
\end{theorem}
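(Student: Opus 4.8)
The plan is to derive Theorem \ref{thm-2cluster-2user} by combining the upper bound of Theorem \ref{thm-DoF-upperbound} with the achievability of Proposition \ref{Prop-2cluster-MW-Ach-DoF}, and then verifying that in each listed configuration the two values coincide. Specializing \eqref{eq:4} to $L=2$, $K=2$ gives
\[
DoF\le \min\left\{M_1^1+M_2^1+M_1^2+M_2^2,\ 2(M_2^1+M_2^2),\ 2N\right\}.
\]
The first step is to show this $\min$ collapses to a single term in the two regimes of interest. If $N\le M_2^1+M_2^2$, then since $M_1^1\ge M_2^1$ and $M_1^2\ge M_2^2$ we have $M_1^1+M_1^2\ge M_2^1+M_2^2\ge N$, hence $M_1^1+M_2^1+M_1^2+M_2^2\ge 2N$ and trivially $2(M_2^1+M_2^2)\ge 2N$, so the bound is exactly $2N$. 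Symmetrically, if $N\ge M_2^1+M_2^2$, then $2(M_2^1+M_2^2)\le 2N$ and $2(M_2^1+M_2^2)\le M_1^1+M_2^1+M_1^2+M_2^2$, so the bound is exactly $2(M_2^1+M_2^2)$. After this, proving the theorem amounts to matching, branch by branch, the achievable values of Proposition \ref{Prop-2cluster-MW-Ach-DoF} with these single surviving terms.

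For part (1) ($DoF^*=2N$), all listed subcases lie in the regime $N\le M_2^1+M_2^2$ of Proposition \ref{Prop-2cluster-MW-Ach-DoF}(i), so the upper bound is $2N$. The subcase $N\le M_2^1$ is exactly Case 1 of the proposition, which achieves $2N$; the four subcases with $M_2^1+M_2^2\ge N>M_2^1\ge M_2^2$ correspond respectively to Case 2 / Condition 1, Case 2 / Condition 2.(1), Case 2 / Condition 3.(1), and Case 2 / Condition 4.(1), each of which achieves $2N$. For part (2) ($DoF^*=2(M_2^1+M_2^2)$), all listed subcases lie in the regime $N>M_2^1+M_2^2$ of Proposition \ref{Prop-2cluster-MW-Ach-DoF}(ii), so the upper bound is $2(M_2^1+M_2^2)$; the subcase $N\ge 2(M_2^1+M_2^2)$ is Case 1 of part (ii), and the subcases under $M_2^1+M_2^2<N<2(M_2^1+M_2^2)$ correspond to Condition 1, to Condition 2 subcases (1)--(2), to Condition 3 subcases (1)--(2), and to Condition 4 subcases (1)--(3), each of which achieves $2(M_2^1+M_2^2)$. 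In the complementary configurations the achievable expressions \eqref{eq:9}--\eqref{eq:10} fall strictly below the relevant term of the upper bound (consistent, since those are themselves achievable), so the optimal DoF is undetermined there, which is why only the configurations above are listed.

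The main obstacle is the bookkeeping rather than any analytic difficulty: one must check that the sub-conditions in the theorem statement map exactly onto sub-conditions of Proposition \ref{Prop-2cluster-MW-Ach-DoF} whose achievable DoF equals the surviving term of the $\min$, with no overlap and no omission. The only genuinely analytic content is the short monotonicity argument ($M_1^l\ge M_2^l$) used to collapse \eqref{eq:4}, together with observing in each branch that the defining inequality (for instance $M_1^1+M_2^1+M_2^2\ge 2N$ in Case 2 / Condition 2, or $M_1^2\ge M_2^1+M_2^2$ in part (ii) / Condition 3) is precisely the hypothesis under which Proposition \ref{Prop-2cluster-MW-Ach-DoF} reports $2N$ or $2(M_2^1+M_2^2)$ instead of one of the smaller fallback quantities.
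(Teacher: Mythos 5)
Your proposal is correct and follows essentially the same route as the paper: Theorem \ref{thm-2cluster-2user} is obtained there precisely by combining the upper bound of Theorem \ref{thm-DoF-upperbound} (which, as you show via $M_1^l\ge M_2^l$, collapses to $2N$ when $N\le M_2^1+M_2^2$ and to $2(M_2^1+M_2^2)$ otherwise) with the branches of Proposition \ref{Prop-2cluster-MW-Ach-DoF} whose achievable DoF meets that bound. Your branch-by-branch correspondence is accurate, so nothing further is needed.
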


\subsection{Two Users in Each Cluster: Symmetric Case} \label{sec:two-cluster-multiway-sym}
We now consider the two-cluster MIMO multi-way relay channel with two users
in each cluster with $M_1^1=M_1^2=M_1$ and $M_2^1=M_2^2=M_2$. The
optimal DoF for this special case is summarized as follows:

\begin{corollary}
For the two-cluster MIMO multi-way relay channel with two users in each
cluster with $M_1^1=M_1^2=M_1$ and $M_2^1=M_2^2=M_2$ (without loss of
generality assume $M_1\ge M_2$), the optimal DoF
is:

When $N\le 2M_2$,
\begin{itemize}
\item $N\le M_2$, $DoF^*=2N$.
\item $M_2<N\le M_1$, $DoF^*=2N$.
\item $M_2\le M_1<N\le \frac{2}{3}(M_1+M_2)$, $DoF^*=2N$.
\end{itemize}

When $N>2M_2$,
\begin{itemize}
\item $N\ge 4M_2$, $DoF^*=4M_2$.
\item $N<4M_2, N\le M_1$, $DoF^*=4M_2$.
\item $N>M_1 \ge 2M_2$, $DoF^*=4M_2$.
\end{itemize}
\end{corollary}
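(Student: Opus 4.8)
The plan is to derive this corollary directly from \emph{Theorem \ref{thm-2cluster-2user}} — equivalently, from the achievability in \emph{Proposition \ref{Prop-2cluster-MW-Ach-DoF}} combined with the converse in \emph{Theorem \ref{thm-DoF-upperbound}} — by setting $M_1^1=M_1^2=M_1$ and $M_2^1=M_2^2=M_2$ and simplifying every inequality that appears in the case split. First I would specialize the converse: \emph{Theorem \ref{thm-DoF-upperbound}} gives $DoF\le\min\{2M_1+2M_2,\,4M_2,\,2N\}$, and since $M_1\ge M_2$ the term $2M_1+2M_2\ge 4M_2$ is redundant, so the bound equals $2N$ when $N\le 2M_2$ and $4M_2$ when $N>2M_2$. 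Since these are exactly the two values claimed, it remains only to match them with achievability in each listed range.

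For $N\le 2M_2$: the range $N\le M_2$ is \emph{Case 1} of \emph{Proposition \ref{Prop-2cluster-MW-Ach-DoF}(i)} and yields $DoF=2N$; the range $M_2<N\le M_1$ falls under \emph{Case 2, Condition 1} (here $N\le M_1^1=M_1^2$), again giving $2N$; and the range $M_1<N\le\frac23(M_1+M_2)$ is \emph{Case 2, Condition 4} in the sub-case $M_1^1+M_2^1+M_1^2+M_2^2=2(M_1+M_2)\ge 3N$, which gives $2N$. For $N>2M_2$: the range $N\ge 4M_2$ is \emph{Case 1} of \emph{Proposition \ref{Prop-2cluster-MW-Ach-DoF}(ii)}; for $2M_2<N<4M_2$ with $N\le M_1$ we use \emph{Case 2, Condition 1}; and when $M_1<N$ and $M_1\ge 2M_2$ we apply \emph{Case 2, Condition 4} in the sub-case $M_1^1,M_1^2\ge M_2^1+M_2^2$ (equivalently $M_1\ge 2M_2$) if $N<4M_2$, falling back to \emph{Case 1} if $N\ge 4M_2$. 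In each of these sub-cases the achievable DoF equals the specialized upper bound, which proves optimality.

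The content of the argument is thus entirely bookkeeping, and the only place one must be careful — the closest thing to an obstacle here — is confirming that the compressed conditions are internally consistent and jointly cover exactly the ranges where the DoF is determined. For instance, $M_1<N\le\frac23(M_1+M_2)$ forces $M_1<2M_2$ and hence $\frac23(M_1+M_2)<2M_2$, so this range lies inside $N\le 2M_2$ automatically; and the conditions $M_1^1<N\le M_1^2$ and $M_1^2<N\le M_1^1$ appearing in \emph{Theorem \ref{thm-2cluster-2user}} become vacuous under symmetry and drop out, which is why the symmetric case split is short. One should also verify the numerical collapse of \eqref{eq:9} and \eqref{eq:10} in the complementary regimes — namely $\frac23(M_1+M_2)<N\le 2M_2$ (possible only when $M_1<2M_2$) and $2M_2<N<4M_2$ with $M_1<\min\{N,2M_2\}$ — where the achievable value stays strictly below the upper bound; this confirms that those ranges are correctly omitted from the corollary and requires no argument beyond substituting $M_1^1=M_1^2=M_1$, $M_2^1=M_2^2=M_2$.
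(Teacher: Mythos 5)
Your proposal is correct and follows essentially the same route as the paper, whose proof of this corollary is simply the observation that it specializes Proposition~\ref{Prop-2cluster-MW-Ach-DoF} and the upper bound of Theorem~\ref{thm-DoF-upperbound} to $M_1^1=M_1^2=M_1$, $M_2^1=M_2^2=M_2$. Your case-by-case matching (including noting that the asymmetric conditions $M_1^1<N\le M_1^2$ and $M_1^2<N\le M_1^1$ become vacuous, and that $M_1<N\le\frac23(M_1+M_2)$ forces $N\le 2M_2$) is exactly the bookkeeping the paper leaves implicit.
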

\begin{proof}
This corollary follows as a special case from {\it Proposition
  \ref{Prop-2cluster-MW-Ach-DoF}}, and the upperbound in {\it Theorem \ref{thm-DoF-upperbound}}.
\end{proof}

\begin{figure}[t]
\centering
\includegraphics[width=5in]{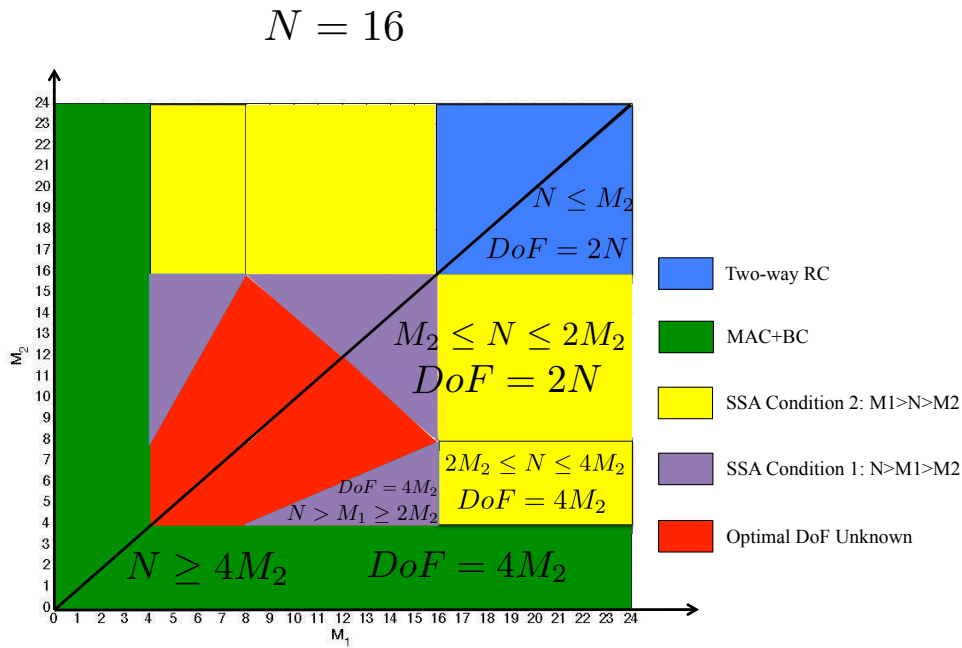}
\caption{Illustration for cases when DoF upper bound can be achieved.} \label{fig-illusSSA}
\end{figure}

Fig. \ref{fig-illusSSA} illustrates the regimes for which we can establish the optimal DoF for the
symmetric case with $N=16$. In the figure, {\it Two-way RC} denotes the region where the DoF can be achieved by only allowing one cluster to exchange data with the relay, which reduces the channel
to a two-way relay channel. {\it MAC+BC} denotes the region that
the users use multiple-access transmission and the relay
decodes and broadcasts the messages to the intended users. {\it
  SSA} represents signal space alignment, where different SSA conditions depend on the number of antennas at the users and the relay. Note that the different SSA conditions correspond to those introduced in {\it Lemma \ref{lma}}.

\subsection{Three Users in Each Cluster: General Case}
We now study the case when there are three users in each cluster for the
general setting. Without loss of generality, assume $M_1^1\ge M_2^1\ge M_3^1$
and $M_1^2\ge M_2^2\ge M_3^2$. The idea of the achievable DoF is similar to the
two-user case, and we thus only focus on identifying the optimal DoF
and describing the corresponding achievable schemes.
\begin{theorem}\label{thm-2cluster-3user}
For the two-cluster MIMO multi-way relay channel with three users in each
cluster where the users and the relay can have arbitrary number of
antennas, the optimal DoF is:

{\it (1) $DoF^* = 2N$ }
\begin{itemize}
\item When $N\le \max\{M_2^1,M_2^2\}$ or
\item When $N> \max\{M_2^1,M_2^2\}$ and one of the following
  conditions is satisfied:
  \begin{itemize}
    \item $M_1^1\ge N$, $M_1^2\ge N$.
    \item $M_1^1\ge N$, $M_1^2< N$,
      \begin{align}
        &M_2^1+M_3^1+(M_2^1+M_3^1-N)^++(M_1^2+M_2^2-N)^+\nonumber\\
        &+(M_1^2+M_3^2-N)^++(M_2^2+M_3^2-N)^+\ge N.
      \end{align}
    \item $M_1^1<N$, $M_1^2\ge N$,
      \begin{align}
        &(M_1^1+M_2^1-N)^++(M_1^1+M_3^1-N)^++(M_2^1+M_3^1-N)^+\nonumber\\
        &+M_2^2+M_3^2+(M_2^2+M_3^2-N)^+\ge N.
      \end{align}
    \item $M_1^1<N$, $M_1^2<N$,
      \begin{align}
        &(M_1^1+M_2^1-N)^++(M_1^1+M_3^1-N)^++(M_2^1+M_3^1-N)^+\nonumber\\
        &+(M_1^2+M_2^2-N)^++(M_1^2+M_3^2-N)^++(M_2^2+M_3^2-N)^+\ge N.
      \end{align}
  \end{itemize}
\end{itemize}

{\it (2) $DoF^*=M_1^1+M_2^1+M_3^1+M_1^2+M_2^2+M_3^2$}
\begin{itemize}
\item When {\it $N\ge
M_1^1+M_2^1+M_3^1+M_1^2+M_2^2+M_3^2$}.
\end{itemize}

{\it (3) $DoF^*=2(M_2^1+M_3^1+M_2^2+M_3^2)$}
\begin{itemize}
\item When $N\ge 2(M_2^1+M_3^1+M_2^2+M_3^2)$ or
\item When $N< 2(M_2^1+M_3^1+M_2^2+M_3^2)$ and one of the following
  conditions is satisfied:
  \begin{itemize}
    \item $M_1^1\ge M_2^1+M_3^1+M_2^2+M_3^2$, $M_1^2\ge
    M_2^1+M_3^1+M_2^2+M_3^2$.
    \item $N\ge 2(M_2^1+M_3^1)+M_2^2+M_3^2$, $M_1^2 \ge
    2(M_2^1+M_3^1)+M_2^2+M_3^2$.
    \item $N\ge M_2^1+M_3^1+2(M_2^2+M_3^2)$, $M_1^1 \ge
    M_2^1+M_3^1+2(M_2^2+M_3^2)$.
  \end{itemize}
\end{itemize}

{\it (4) $DoF^*=M_1^1+M_2^1+M_3^1+2(M_2^2+M_3^2)$}
\begin{itemize}
\item When $N\ge M_1^1+M_2^1+M_3^1+2(M_2^2+M_3^2)$ or
\item When $M_1^1+M_2^1+M_3^1+M_2^2+M_3^2\le N<
  M_1^1+M_2^1+M_3^1+2(M_2^2+M_3^2)$, $M_1^2\ge
  M_1^1+M_2^1+M_3^1+M_2^2+M_3^2$.
\end{itemize}

{\it (5) $DoF^*=2(M_2^1+M_3^1)+M_1^2+M_2^2+M_3^2$}
\begin{itemize}
\item When $N\ge 2(M_2^1+M_3^1)+M_1^2+M_2^2+M_3^2$ or
\item When $M_2^1+M_3^1+M_1^2+M_2^2+M_3^2\le
  N<2(M_2^1+M_3^1)+M_1^2+M_2^2+M_3^2$, $M_1^1\ge M_2^1+M_3^1+M_1^2+M_2^2+M_3^2$.
\end{itemize}
\end{theorem}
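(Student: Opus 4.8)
The plan is to establish matching upper and lower bounds in each of the five cases. For the upper bound I will combine two ingredients: (i) Theorem~\ref{thm-DoF-upperbound}, which here reads $DoF\le\min\{\sum_{l,k}M_k^l,\ 2\sum_{l}(M_2^l+M_3^l),\ 2N\}$; and (ii) the per-cluster two-way cut-set bounds that already occur inside the proof of Theorem~\ref{thm-DoF-upperbound}: after giving every cluster all the messages of the other cluster and grouping all users of cluster $l$ except user $i$ into a single node, one obtains $\sum_{k\ne i}d_{ik}^l\le\min\{M_i^l,N,\sum_{k\ne i}M_k^l\}$, and summing this over $i$ and $l$ (each message counted once, at its destination) yields $DoF\le\sum_{l}\big[\min\{M_1^l,N,M_2^l+M_3^l\}+M_2^l+M_3^l\big]$ whenever $N$ is large enough that the last two $\min$'s are attained by the antenna terms. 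For achievability I will generalize the signal-space-alignment (SSA) and multiple-access (MAC) constructions of Proposition~\ref{Prop-2cluster-MW-Ach-DoF}, sizing the shared relay subspace of each pair of users in a cluster by Lemma~\ref{lma} and, where useful, restricting the relay to a subset $N'<N$ of its antennas as in Remark~\ref{remarkon_ssa_relayantennas}.

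\emph{Upper bound.} Cases (2) and (3) are direct: case (2) is exactly the regime where $\sum_{l,k}M_k^l$ is the smallest of the three terms of Theorem~\ref{thm-DoF-upperbound}, and under the dominance hypotheses of case (3) one checks that $2\sum_l(M_2^l+M_3^l)$ is smallest; likewise case (1) is the regime where $2N$ is smallest, which I will verify from the stated antenna inequalities (trivial when $N\le\max\{M_2^1,M_2^2\}$, and for the remaining sub-cases the displayed sums of $(\cdot)^+$ terms are arranged precisely so that $2N\le\sum_{l,k}M_k^l$ and $2N\le2\sum_l(M_2^l+M_3^l)$). Cases (4) and (5) are where Theorem~\ref{thm-DoF-upperbound} alone is loose; there I will invoke the summed per-cluster cut-set bound above. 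Under the hypotheses of case (4), $N$ is large and $M_1^2$ dominant, so that bound collapses to $(M_1^1+M_2^1+M_3^1)+(M_2^2+M_3^2)+(M_2^2+M_3^2)=M_1^1+M_2^1+M_3^1+2(M_2^2+M_3^2)$; case (5) follows by swapping the roles of the two clusters.

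\emph{Achievability.} For case (2) the relay zero-forces all $\sum_{l,k}M_k^l$ uplink streams as in a MIMO multiple-access channel and zero-force precodes them back, which needs exactly $N\ge\sum_{l,k}M_k^l$. For case (1) I will use pure SSA: within cluster $l$, for each pair $(i,j)$ user $i$ sends $a_{ij}^l=a_{ji}^l$ streams of $W_{ji}^l$ precoded so that they arrive at the relay aligned with user $j$'s streams of $W_{ij}^l$ — possible up to the alignable dimension given by Lemma~\ref{lma} (Condition~1 when both users have fewer than $N$ antennas, free alignment when one of them has at least $N$) — and the relay decodes the $N$ resulting network-coded sums, broadcasts them, and each user cancels its own contribution. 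The displayed $(\cdot)^+$ conditions are exactly the statement that the total alignable dimension over the six pairs is $\ge N$, so all $N$ relay dimensions can be filled, giving $DoF=2N$. For cases (3)--(5) I will use hybrid schemes: the ``small'' cluster(s) run SSA on their shared subspace, often after restricting the relay to a suitable $N'<N$ so that no relay dimension is wasted, while the remaining $N-(\text{SSA dims})$ relay dimensions are used by the other cluster in plain MAC/broadcast mode; adding $2\times(\text{SSA dims})$ and $(\text{MAC dims})$ and using the stated inequalities yields exactly the claimed $DoF^*$.

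\emph{Main obstacle.} The real work is the achievability bookkeeping for the hybrid and SSA cases. One must show that within a three-user cluster the three pairwise alignments $(1,2),(1,3),(2,3)$ can be realized simultaneously with all aligned relay directions in general position (so the relay can zero-force-separate the network-coded sums), subject to $\sum_{j\ne i}a_{ij}^l\le M_i^l$ and $a_{ij}^l\le\big(\text{alignable dim of }(i,j)\big)$; then choose $N'$ and split the relay dimensions between the SSA and MAC parts so that the achieved sum exactly meets the (possibly refined) upper bound; and finally handle the integrality of $N'$ by symbol extension, as in Proposition~\ref{Prop-2cluster-MW-Ach-DoF}. Once these allocations are phrased as a small feasibility problem in the $a_{ij}^l$'s, each case reduces to an inequality already implied by the case hypotheses.
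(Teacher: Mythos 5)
Your proposal follows essentially the same route as the paper: the converse combines the $2N$ genie bound of Theorem~\ref{thm-DoF-upperbound} with the per-cluster cut-set bounds (which is exactly how the paper arrives at its five-term upper bound for the three-user case), and achievability uses pairwise signal-space alignment sized by Lemma~\ref{lma}, multiple-access transmission, restriction of the relay to a subset of its antennas, and symbol extension, matching the paper's case-by-case analysis. The stream-allocation feasibility you flag as the main obstacle is treated with the same level of informality in the paper itself, so your plan does not diverge from the published argument.
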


\begin{proof}
We first consider the DoF upperbound in {\it
  Theorem \ref{thm-DoF-upperbound}}. For the three user case, the upper bound reduces to
\begin{align}
DoF\le
&\Bigl\{2N,M_1^1+M_2^1+M_3^1+M_1^2+M_2^2+M_3^2,2(M_2^1+M_3^1+M_2^2+M_3^2),\\
&M_1^1+M_2^1+M_3^1+2(M_2^2+M_3^2),2(M_2^1+M_3^1)+M_1^2+M_2^2+M_3^2\Bigr\}.
\end{align}

We now investigate the following cases to establish the optimal DoF:

\subsubsection{When $2N$ is the binding term in the DoF upper bound} For
this case, we have
\begin{equation}
M_1^1+M_2^1+M_3^1+M_1^2+M_2^2+M_3^2\ge 2N, M_2^1+M_3^1+M_2^2+M_3^2\ge N.
\end{equation}

If we have $N\le \max\{M_2^1,M_2^2\}$, the DoF $2N$ can always be
achieved by only letting the two users with more antennas than the
relay to transmit using FDF schemes, which in fact reduces the channel
to a two-way relay channel.

If we have $N>\max\{M_2^1,M_2^2\}$, we have $N>M_2^1\ge M_3^1$ and
$N>M_2^2\ge M_3^2$. We consider the following scenarios:
\begin{itemize}
\item $M_1^1\ge N$ and $M_1^2\ge N$: From {\it Lemma \ref{lma}}, the
  dimension of shared signal space between the users is summarized in
  the table below:

\begin{center}
    \begin{tabular}{ | l | l || l | l |}
    \hline
    \multicolumn{2}{|c||}{Cluster 1} & \multicolumn{2}{c|}{Cluster 2}\\
    \hline
    User 1 and 2 & $M_2^1$ & User 1 and 2 & $M_2^2$ \\
    \hline
    User 1 and 3 & $M_3^1$ & User 1 and 2 & $M_3^2$ \\
    \hline
    User 2 and 3 & $(M_2^1+M_3^1-N)^+$ & User 2 and 3 &
    $(M_2^2+M_3^2-N)^+$ \\
    \hline
    \end{tabular}
\end{center}

Since we have $M_2^1+M_3^1+M_2^2+M_3^2\ge N$, The DoF upper bound $2N$
can thus be achieved by letting user 1 and user 2, and
user 1 and user 3 in each cluster to exchange messages using signal
space alignment, such that all the dimension of the signal space of
the relay is shared by one pair of users. The detailed scheme
is similar to the two-user case and is thus omitted.

\item $M_1^1\ge N$ and $N>M_1^2$: From {\it Lemma \ref{lma}}, the
  dimension of shared signal space between the users is summarized in
  the table below:

\begin{center}
    \begin{tabular}{ | l | l || l | l |}
    \hline
    \multicolumn{2}{|c||}{Cluster 1} & \multicolumn{2}{c|}{Cluster 2}\\
    \hline
    User 1 and 2 & $M_2^1$ & User 1 and 2 & $(M_1^2+M_2^2-N)^+$ \\
    \hline
    User 1 and 3 & $M_3^1$ & User 1 and 2 & $(M_1^2+M_3^2-N)^+$ \\
    \hline
    User 2 and 3 & $(M_2^1+M_3^1-N)^+$ & User 2 and 3 &
    $(M_2^2+M_3^2-N)^+$ \\
    \hline
    \end{tabular}
\end{center}

If we have
\begin{equation}
M_2^1+M_3^1+(M_2^1+M_3^1-N)^++(M_1^2+M_2^2-N)^++(M_1^2+M_3^2-N)^++(M_2^2+M_3^2-N)^+\ge N,
\end{equation}
then all the dimension of the signal space at the relay can be shared
by one pair of users. Using signal space alignment, user 1 and user 2 in cluster 1, user 1 and user
3 in cluster 1, and the rest user pairs with $M_i^l+M_k^l-N>0$ can
exchanges messages. The DoF upper bound $2N$ can thus be achieved.

\item $N>M_1^1$ and $M_1^2\ge N$: From {\it Lemma \ref{lma}}, the
  dimension of shared signal space between the users is summarized in
  the table below:

\begin{center}
    \begin{tabular}{ | l | l || l | l |}
    \hline
    \multicolumn{2}{|c||}{Cluster 1} & \multicolumn{2}{c|}{Cluster 2}\\
    \hline
    User 1 and 2 & $(M_1^1+M_2^1-N)^+$ & User 1 and 2 & $M_2^2$ \\
    \hline
    User 1 and 3 & $(M_1^1+M_3^1-N)^+$ & User 1 and 2 & $M_3^2$ \\
    \hline
    User 2 and 3 & $(M_2^1+M_3^1-N)^+$ & User 2 and 3 &
    $(M_2^2+M_3^2-N)^+$ \\
    \hline
    \end{tabular}
\end{center}

If we have
\begin{equation}
(M_1^1+M_2^1-N)^++(M_1^1+M_3^1-N)^++(M_2^1+M_3^1-N)^++M_2^2+M_3^2+(M_2^2+M_3^2-N)^+\ge N,
\end{equation}
then the DoF upper bound $2N$ can be achieved following similar
arguments as in the previous case.

\item $N>M_1^1$ and $N>M_1^2$: From {\it Lemma \ref{lma}}, the
  dimension of shared signal space between the users is summarized in
  the table below:

\begin{center}
    \begin{tabular}{ | l | l || l | l |}
    \hline
    \multicolumn{2}{|c||}{Cluster 1} & \multicolumn{2}{c|}{Cluster 2}\\
    \hline
    User 1 and 2 & $(M_1^1+M_2^1-N)^+$ & User 1 and 2 & $(M_1^2+M_2^2-N)^+$ \\
    \hline
    User 1 and 3 & $(M_1^1+M_3^1-N)^+$ & User 1 and 2 & $(M_1^2+M_3^2-N)^+$ \\
    \hline
    User 2 and 3 & $(M_2^1+M_3^1-N)^+$ & User 2 and 3 &
    $(M_2^2+M_3^2-N)^+$ \\
    \hline
    \end{tabular}
\end{center}

If we have
\begin{align}
&(M_1^1+M_2^1-N)^++(M_1^1+M_3^1-N)^++(M_2^1+M_3^1-N)^+\nonumber\\
&+(M_1^2+M_2^2-N)^++(M_1^2+M_3^2-N)^++(M_2^2+M_3^2-N)^+\ge N,
\end{align}
then the DoF upper bound $2N$ can be achieved by letting user pairs
with $M_i^l+M_j^l-N>0$ to exchange messages such that all dimension
of the signal space of the relay is utilized by a pair of users using
signal space alignment.
\end{itemize}

\subsubsection{When $M_1^1+M_2^1+M_3^1+M_1^2+M_2^2+M_3^2$ is the
  binding term in the DoF upper bound} For this case, we have
$M_1^1\le M_2^1+M_3^1$ and $M_1^2\le M_2^2+M_3^2$. The DoF
upper bound $M_1^1+M_2^1+M_3^1+M_1^2+M_2^2+M_3^2$ can be achieved if
$N\ge M_1^1+M_2^1+M_3^1+M_1^2+M_2^2+M_3^2$, i.e., users utilize the
relay in the multiple-access fashion and relay can decode all the messages from the
users and broadcast the messages to the intended users.

When $N< M_1^1+M_2^1+M_3^1+M_1^2+M_2^2+M_3^2$, it is easy to verify
that the dimension of the shared signal space between all the users is
always less than ${M_1^1+M_2^1+M_3^1+M_1^2+M_2^2+M_3^2 \over 2}$, and
whether the DoF upper bound can be achieved is unknown.

\subsubsection{When $2(M_2^1+M_3^1+M_2^2+M_3^2)$ is the binding term
  in the DoF upper bound}
For this case, we have $M_1^1>M_2^1+M_3^1$ and $M_1^2>M_2^2+M_3^2$. We
also have $N>M_2^1+M_3^1+M_2^2+M_3^2$, which means user 2 and user 3
in each cluster cannot share any dimension of the signal space of the
relay. The DoF upper bound can be achieved for the following cases:
\begin{itemize}
\item $N\ge 2(M_2^1+M_3^1+M_2^2+M_3^2)$: Under this condition, the DoF
  upper bound can be achieved by letting all the users use the relay
  in the multiple-access fashion. The relay can decode all the
  messages and then broadcast the messages back to the intended
  users.
\item $N< 2(M_2^1+M_3^1+M_2^2+M_3^2)$: Under this condition, the DoF
  upper bound can be achieve for the following cases:
  \begin{itemize}
  \item $M_1^1\ge M_2^1+M_3^1+M_2^2+M_3^2$ and $M_1^2\ge
    M_2^1+M_3^1+M_2^2+M_3^2$: The DoF upper bound can be achieved by
    only allowing the relay to use $M_2^1+M_3^1+M_2^2+M_3^2$ antennas
    to assist the users. Based on {\it Lemma \ref{lma}}, the dimension
    of the shared signal space between the users is
    \begin{center}
      \begin{tabular}{ | l | l || l | l |}
        \hline
        \multicolumn{2}{|c||}{Cluster 1} & \multicolumn{2}{c|}{Cluster 2}\\
        \hline
        User 1 and 2 & $M_2^1$ & User 1 and 2 & $M_2^2$ \\
        \hline
        User 1 and 3 & $M_3^1$ & User 1 and 3 & $M_3^2$ \\
        \hline
        User 2 and 3 & $0$ & User 2 and 3 & $0$ \\
        \hline
      \end{tabular}
    \end{center}
    The DoF upper bound can thus be achieved by letting user 1 and
    user 2, user 1 and user 3 in each cluster to exchange messages
    using signal space alignment.
  \item $N\ge 2(M_2^1+M_3^1)+M_2^2+M_3^2$ and $M_1^2 \ge
    2(M_2^1+M_3^1)+M_2^2+M_3^2$: The DoF upper bound can be achieved
    by only allowing the relay to use $2(M_2^1+M_3^1)+M_2^2+M_3^2$
    antennas to assist the users. Based on {\it Lemma \ref{lma}}, user
    1 and user 2 in cluster 2 share $M_2^2$ dimensional signal space,
    and user 1 and user 3 in cluster 2 share $M_3^2$ dimensional
    signal space. These pairs of users occupy $M_2^2+M_3^2$
    dimensional signal space at the relay, and can be used to exchange
    $2(M_2^2+M_3^2)$ messages. Users in
    cluster 1 can utilize the rest $2(M_2^1+M_3^1)$ dimensional signal
    space in the multiple-access fashion to exchange $2(M_2^1+M_3^1)$
    messages. The DoF upper bound $2(M_2^1+M_3^1+M_2^2+M_3^2)$ can
    thus be achieved.
  \item $N\ge M_2^1+M_3^1+2(M_2^2+M_3^2)$ and $M_1^1 \ge
    M_2^1+M_3^1+2(M_2^2+M_3^2)$: The DoF upper bound $2(M_2^1+M_3^1+M_2^2+M_3^2)$ can
    be achieved by only allowing the relay to use
    $M_2^1+M_3^1+2(M_2^2+M_3^2)$ antennas, following similar arguments
    as in the previous case.
  \end{itemize}
\end{itemize}

\subsubsection{When $M_1^1+M_2^1+M_3^1+2(M_2^2+M_3^2)$ is the binding term
  in the DoF upper bound} \label{sec:DoFupperBound4}
For this case we have $M_1^1<M_2^1+M_3^1$
and $M_1^2>M_2^2+M_3^2$. We also have
\begin{equation}
N>\frac{M_1^1+M_2^1+M_3^1}{2}+M_2^2+M_3^2.
\end{equation}

The DoF upper bound can be achieved for the
following cases:
\begin{itemize}
\item $N\ge M_1^1+M_2^1+M_3^1+2(M_2^2+M_3^2)$: The DoF upper bound can
  be simply achieved by letting the users exchange their messages
  using the relay in the multiple-access fashion. The relay can decode
  all the messages and broadcast the messages back to the intended
  users since it has sufficient spatial dimension.
\item $M_1^1+M_2^1+M_3^1+M_2^2+M_3^2\le N<
  M_1^1+M_2^1+M_3^1+2(M_2^2+M_3^2)$ and $M_1^2\ge
  M_1^1+M_2^1+M_3^1+M_2^2+M_3^2$: The DoF upper bound can be achieved
  by only allowing the relay to use $M_1^1+M_2^1+M_3^1+M_2^2+M_3^2$
  antennas to assist the users. Based on {\it Lemma \ref{lma}}, user 1
  and user 2 in cluster 2 share $M_2^2$ dimensional space, and user 1
  and user 3 in cluster 2 share $M_3^2$ dimensional space at the
  relay, which allows the users to exchange $2(M_2^2+M_3^2)$
  messages using $M_2^2+M_3^2$ dimensional space at the relay. The
  rest $M_1^1+M_2^1+M_3^1$ dimensional signal space at the relay can
  be used to assist users in cluster 1 to exchange
  $M_1^1+M_2^1+M_3^1$ messages in the multiple-access
  fashion. Therefore the DoF $M_1^1+M_2^1+M_3^1+2(M_2^2+M_3^2)$ can be
  achieved.
\end{itemize}

\subsubsection{When $2(M_2^1+M_3^1)+M_1^2+M_2^2+M_3^2$ is the binding term
  in the DoF upper bound} For this case, the DoF upper bound can be achieved for
the following scenarios:
\begin{itemize}
\item $N\ge 2(M_2^1+M_3^1)+M_1^2+M_2^2+M_3^2$.
\item $M_2^1+M_3^1+M_1^2+M_2^2+M_3^2\le
  N<2(M_2^1+M_3^1)+M_1^2+M_2^2+M_3^2$ and $M_1^1\ge M_2^1+M_3^1+M_1^2+M_2^2+M_3^2$.
\end{itemize}
This case is similar to case 4), and the details are thus omitted.
\end{proof}

\section{$L$-Cluster $K$-User MIMO multi-way Relay Channel}\label{sec:general-multiway}
Consider now the general $L$-cluster $K$-user
MIMO multi-way relay channel in the symmetric setting, i.e., all the users
have the same number of antennas. We have the following optimal DoF result.

\begin{theorem}\label{thm-Lcluster-Kuser}
For the symmetric $L$-cluster $K$-user MIMO multi-way relay channel, where
all users have $M$ antennas and the relay has $N$ antennas, the
optimal DoF is
\begin{equation}
DoF^*=KLM\quad {\rm if}~N\ge KLM,
\end{equation}
\begin{equation}
DoF^*=2N\quad {\rm if}~\frac{LK(K-1)}{2}(2M-N)\ge N.
\end{equation}
\end{theorem}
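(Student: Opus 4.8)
The theorem has two parts. The first, $DoF^*=KLM$ when $N\ge KLM$, follows immediately: the cut-set term $\sum_{l,k}M_k^l = KLM$ from Theorem~\ref{thm-DoF-upperbound} is an upper bound, and since $N\ge KLM$ the relay has enough spatial dimensions to decode every user's stream as in a MAC and then zero-force precode on the broadcast side, so $KLM$ is achievable. I would dispatch this case in one or two sentences. The substance is the second part: when $\frac{LK(K-1)}{2}(2M-N)\ge N$, we must show $DoF^*=2N$. The upper bound $DoF\le 2N$ is already in Theorem~\ref{thm-DoF-upperbound}, so the entire task is \emph{achievability of $2N$} under the stated antenna condition.

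\textbf{Achievability via pairwise signal space alignment.} The strategy is to have \emph{every} unordered pair of users within each cluster align a block of streams at the relay, using Condition~1 of Lemma~\ref{lma}. In a single cluster there are $\binom{K}{2}=\frac{K(K-1)}{2}$ such pairs, and across $L$ clusters there are $\frac{LK(K-1)}{2}$ pairs total. Since all users have $M$ antennas and the relay has $N$ antennas, each pair $(\mathbf H_1,\mathbf H_2)$ with $p=N$, $q_1=q_2=M$ satisfies Condition~1 of Lemma~\ref{lma} precisely when $2M>N$, in which case that pair can share a $(2M-N)$-dimensional subspace of the relay's signal space: there exist directions $\mathbf v_i$ with $\mathbf v_i=\mathbf H_1\mathbf u_i=\mathbf H_2\mathbf w_i$ for $i=1,\dots,2M-N$. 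The condition $\frac{LK(K-1)}{2}(2M-N)\ge N$ says exactly that the total number of available aligned directions, summed over all pairs, is at least $N$ --- enough to tile the entire $N$-dimensional relay space. So I would: (1) assign to each pair a quota of aligned directions, the quotas being nonnegative integers summing to $N$ and each not exceeding $2M-N$ for that pair (possible by the hypothesis); (2) have the two users in each pair transmit, along their respective $\mathbf u_i,\mathbf w_i$ beamformers, independent streams that superimpose at the relay onto the common directions $\mathbf v_i$, so the relay sees a generic full-rank mixture of $N$ sums-of-pairs and decodes each sum by zero-forcing; (3) on the downlink, invoke the transposed version of Lemma~\ref{lma} (the receive filters $\mathbf u_{(k,l)i}$ solve the same problem as the $\mathbf v$ vectors, as in the proof of Proposition~\ref{Prop-2cluster-MW-Ach-DoF}) so that within each pair the relay can zero-force precode the decoded sum toward both users, each of whom cancels its own stream using its side information $\mathcal W_k^l$. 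Each aligned direction thus carries two message streams, yielding $2N$ DoF. Symbol extension handles non-integrality of any quotas exactly as in the two-cluster proof.

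\textbf{Main obstacle.} The delicate point is the \emph{joint} feasibility of all the alignments and zero-forcing steps \emph{simultaneously}: I must argue that the $N$ chosen aligned directions $\{\mathbf v_i\}$ --- drawn from $\frac{LK(K-1)}{2}$ different pairwise-shared subspaces, each generated by independent continuous channel matrices --- are linearly independent almost surely, and likewise that the corresponding downlink directions $\{\mathbf g_{li}\}$ are linearly independent, so the relay's zero-forcing decoder and precoder are both well-defined. This is the same ``generic position'' argument used implicitly in the two-cluster proofs, and it goes through because distinct pairs involve distinct (hence independent) channel matrices and the shared subspace of one pair is a generic subspace unrelated to another's; one checks that the relevant determinant is a nonzero polynomial in the channel entries and hence nonzero a.s. A secondary bookkeeping obstacle is verifying that the per-pair quotas can always be chosen as integers within $[0,2M-N]$ summing to $N$ under the hypothesis --- this is elementary (fill greedily), but should be stated. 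I would also note, as a remark, that when $2M\le N$ the hypothesis fails unless $N\le 0$, consistent with the fact that pairwise alignment buys nothing once the relay out-dimensions any two users; this is the same ``1 bit for 2 bits'' intuition emphasized in the introduction.
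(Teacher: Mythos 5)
Your proposal is correct and follows essentially the same route as the paper: multiple-access transmission with zero-forcing for the $N\ge KLM$ regime, and pairwise signal space alignment via Condition 1 of Lemma~\ref{lma} (with per-pair quotas of at most $2M-N$ aligned directions summing to $N$, dual receive filters on the downlink, and integer-rounding of the quotas) for the $2N$ regime. Your explicit flagging of the almost-sure joint linear independence of the $N$ aligned directions is a point the paper leaves implicit, but it does not constitute a different approach.
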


To establish the optimal DoF, we first study the DoF upperbound. For
this case, the DoF upperbound in equation
(\ref{eq:4}) becomes
\begin{equation}
DoF\le \min\left\{KLM,2N\right\}.
\end{equation}

To investigate the achievability of the DoF upperbound, we further
consider the following cases:

\subsection{Achieving DoF $KLM$: Multiple-access transmission}
When $2N > KLM$, the DoF upper bound becomes $KLM$. The DoF upper
bound can be achieved when $N\ge KLM$. Under this
condition, the relay can decode all the messages from all the users
and can broadcast the messages to the intended users without inducing any
interference.

\subsection{Achieving DoF $2N$: Signal Space Alignment}
  When $2N\le KLM$, the DoF upper bound becomes $2N$. To achieve
  this upperbound, we require each signal dimension at the relay to be
  shared by a pair of users. From {\it Lemma \ref{lma}}, any pair of
  users in the same cluster can share $2M-N$ dimensional signal space
  at the relay, if $2M\ge N$. Therefore, we need
  \begin{equation}
   L{K \choose 2}(2M-N)\ge N,
  \end{equation} or equivalently
  \begin{equation}
   \frac{LK(K-1)}{2}(2M-N)\ge N,
  \end{equation} such that all the signal dimension at the relay can
  be shared by a pair of users. We can choose any pair of users to
  exchange data streams without exceeding their maximum allowed
  dimension of shared signal space $2M-N$. We let the users exchange
  $N$ pairs of data streams, and the relay can decode the sum of each pair of data
  stream and broadcast to the users with proper receiver-side processing. The
  detailed transmission scheme is described as follows:

  If $n=\frac{2N}{LK(K-1)}$ is an integer, we let user $i$ and
  user $j$, $i,j=1,2,\cdots,K$ in cluster $l$, $l=1,2,\cdots,L$
  exchange $n$ data streams, each with unit DoF. Since we have
  $2M-N\ge n$, based on {\it Lemma \ref{lma}}, each pair of users can transmit the data streams along
  the vectors ${\bf v}_{(ij),m}^l$ and ${\bf v}_{(ji),m}^l$, $m=1,2,\cdots,n$ such that
  \begin{equation}
   {\bf H}_{R(i,l)}{\bf v}_{(ij),m}^l={\bf H}_{R(j,l)}{\bf v}_{(ji),m}^l={\bf q}_{(ij),m}^l,
  \end{equation} where ${\bf v}_{(ij),m}^l$ denotes the $m$th beamforming
  vector for user $i$ in cluster $l$ to share the signal space at the relay with
  user $j$ in cluster $l$.

  The transmitted signal from user $i$ in cluster $l$ is thus
  \begin{equation}
   {\bf X}_{i,l}=\sum_{\substack{j=1\\j\neq i}}^K\sum_{m=1}^n{\bf v}_{(ij),m}^ld_{(ij),m}^l,
  \end{equation} where $d_{(ij),m}^l$ denotes the $m$th message from
  user $i$ in cluster $l$ for user $j$ in cluster $l$.

  The received signal at the relay is
  \begin{align}
    {\bf Y}_R&=\sum_{l=1}^L\sum_{i=1}^K{\bf H}_{R(i,l)}{\bf X}_{i,l}\\
    &=\sum_{l=1}^L\sum_{i=1}^K\sum_{\substack{j=1\\j\neq
        i}}^K\sum_{m=1}^n{\bf H}_{R(i,l)}{\bf
      v}_{(ij),m}^ld_{(ij),m}^l\\
    &=\sum_{l=1}^L\sum_{i=1}^K\sum_{\substack{j=i+1\\j\neq
        i}}^K\sum_{m=1}^n{\bf q}_{(ij),m}^l(d_{(ij),m}^l+d_{(ji),m}^l).
  \end{align}

  The relay can decode $d_{(ij),m}^l+d_{(ji),m}^l$ and then need to
  broadcast the messages back to the users. Following a similar scheme
  as in the two-cluster case, we let user $i$ and user $j$ employ a receiver-side
  filter $\left({\bf u}_{(ij),m}^l\right)$ and $\left({\bf
    u}_{(ji),m}^l\right)$ to decode the message
  $d_{(ij),m}^l+d_{(ji),m}^l$, where
  \begin{equation}
    \left({\bf u}_{(ij),m}^l\right)^T{\bf H}_{(i,l)R}=\left({\bf
      u}_{(ji),m}^l\right)^T{\bf H}_{R(j,l)}={\bf g}_{(ij),m}^l.\label{eq:13}
  \end{equation}

  Finding the receiver-side filter is a dual problem to finding the
  beamforming vector ${\bf v}_{(ij),m}^l$, which can be seen by taking transpose of
  equation (\ref{eq:13}). Based on {\it Lemma
    \ref{lma}}, there exist $2M-N$ such pair of vectors $\left({\bf u}_{(ij),m}^l\right)^T$ and $\left({\bf
    u}_{(ji),m}^l\right)^T$. User $i$ in cluster $l$ can choose $n$ out of these
  vectors to form a filtering matrix ${\bf U}_{(ij)}^l$ to receive the
  messages $d_{(ij),m}^l+d_{(ji),m}^l$, where the matrix ${\bf U}_{(ij)}^l$ is formed by taking
  $\left({\bf u}_{(ij),m}^l\right)^T$ as its rows. We can also combine
  the matrices ${\bf U}_{(ij)}^l$ for all $j=1,\cdots,K$, $j\neq
  i$ to form the filtering matrix for user $i$ to decode all the
  intended messages:
  \begin{equation}
    {\bf U}_{i}^l=\left[\begin{array}{c}
                        {\bf U}_{(i1)}^l\\
                        \vdots\\
                        {\bf U}_{(i,i-1)}^l\\
                        {\bf U}_{(i,i+1)}^l\\
                        \vdots\\
                        {\bf U}_{(iK)}^l\\
    \end{array}\right].
  \end{equation}

  The relay can use zero-forcing to broadcast the messages
  $d_{(ij),m}^l+d_{(ji),m}^l$ to the intended users. The users can
  decode the intended messages using their side information. The
  DoF $2N$ is thus achievable.

  When $n=\frac{2N}{LK(K-1)}$ is not an integer, we can let one pair
  of users to exchange
  \begin{equation}
    N-\left(\frac{LK(K-1)}{2}-1\right)\left\lceil\frac{2N}{LK(K-1)}\right\rceil
  \end{equation} messages, and the other user pairs exchange
  \begin{equation}
    \left\lceil\frac{2N}{LK(K-1)}\right\rceil
  \end{equation} messages, and the DoF $2N$ is still achievable.

\begin{remark}
{\it Theorem \ref{thm-Lcluster-Kuser}} provides us with the first DoF result for the $L$-cluster, $K$-user MIMO multi-way relay
channel for arbitrary $L$, $K$. We can see that the DoF is always limited by the available
spatial dimension at the relay, and that with fixed number of antennas at the
relay, increasing the number of users and the number of clusters
cannot provide DoF gain. In addition, we gain the insight that the DoF optimal way to utilize the resources of the relay is to
share the relay between two users. We cannot obtain DoF gain by
letting three or more users to share the resources of the relay.
\end{remark}
\begin{remark}
The result for the asymmetric case of the general $L$-cluster $K$-user
MIMO multi-way relay channel can be obtained following similar arguments as
in the two-cluster case. Other than having to enumerate a number of cases and conditions, the results do not provide further insights to what we already provide for the symmetric case. Hence the detailed expressions for the asymmetric cases are omitted here.
\end{remark}

\section{Conclusion}\label{sec:conclusion}
In this paper, we have investigated the DoF for the general MIMO multi-way
relay channel and established the optimal DoF for several scenarios of
interests. We have derived a new DoF upper bound using genie-aided
approach, which is shown to be tight for
several scenarios of interests. Specifically, we have studied the DoF for the two-cluster two-user
MIMO multi-way relay channel and two-cluster three-user MIMO multi-way relay
channel with arbitrary number of antennas, and established the optimal
DoF using signal space alignment, multiple-access transmission, or a combination of
both, depending on the number of antennas at the
users and the relay. We have also studied the $L$-cluster $K$-user MIMO multi-way relay
channel with equal number of antennas at the users, and established
the optimal DoF. The DoF results imply that the DoF of the
MIMO multi-way relay channel is always limited by the spatial dimension
available at the relay. With fixed number of antennas at the relay,
increasing the number of users and clusters cannot provide any DoF
gain. The results also imply that allowing three or more
users to share the resources of the relay cannot provide any DoF gain. 

This work has established the optimal DoF for a variety of scenarios for the multi-way relay channel which was unknown previously. For the remaining cases, determining the strategies to achieve the optimal DoF remains open.

\appendices
\section{Proof of Lemma \ref{lma}}\label{sec:proof-lemma-reflma}
\begin{proof}
We first consider the case when $p\ge q_1 \ge q_2$ and
$q_1+q_2>p$. Note that equation (\ref{eq:1}) is equivalent as
\begin{equation}
\left[\begin{array}{ccc}
         {\bf I} & {\bf H}_{1} & {\bf 0}\\
         {\bf I} & {\bf 0} & {\bf H}_{2}\\
\end{array}\right]
\left[\begin{array}{c}
         {\bf v}_{i}\\
         {\bf u}_{i}\\
         {\bf w}_{i}\\
      \end{array}
\right]={\bf 0} \label{eq:14}.
\end{equation}

The null space of the matrix
\begin{equation}
\left[\begin{array}{ccc}
         {\bf I} & {\bf H}_{1} & {\bf 0}\\
         {\bf I} & {\bf 0} & {\bf H}_{2}\\
\end{array}\right]\label{eq:15}
\end{equation} has dimension $q_1+q_2-p$. It is easy to see that if
$q_1+q_2>p$, then we can find $q_1+q_2-p$ non-zero linearly independent vectors
of the form
\begin{equation}
\left[\begin{array}{ccc}
         {\bf v}_{i} & {\bf u}_{i} & {\bf w}_{i}\\
      \end{array}
\right]^T \label{eq:17}
\end{equation}
from the null space of the matrix shown in equation (\ref{eq:15}). It
remains to see whether all these vectors satisfy ${\bf v}_{i}\ne
0$. Since $p\ge q_1\ge q_2$, we can see that the null space of matrices ${\bf
  H}_{1}$ and ${\bf H}_{2}$ has dimension 0. Therefore for all the
non-zero vectors satisfying equation (\ref{eq:14}), we must have ${\bf v}_{i}\ne
0$.

Similarly, when $q_1\ge p \ge q_2$, we can find $q_1+q_2-p$ non-zero
linearly independent vectors of the form shown in equation
(\ref{eq:17}) to satisfy equation (\ref{eq:14}). However, for this
case, if we consider the equation ${\bf v}_i={\bf H}_2{\bf w}_{i}$, we
can see that there are at most $q_2$ non-zero linearly independent
vectors ${\bf v}_i$ satisfying this equation. In fact, since $q_1\ge p$, the null space of matrix ${\bf H}_{1}$ has
dimension $q_1-p$. When we set ${\bf w}_i$ and ${\bf v}_i$ to 0, we can find $q_1-p$
non-zero linearly independent vectors ${\bf u}_i$ to satisfy equation
(\ref{eq:14}). Therefore we can conclude that among all vectors of the
form in equation (\ref{eq:17}) satisfying equation (\ref{eq:14}),
we can only find $q_2$ non-zero linearly independent
vectors ${\bf v}_i$.
\end{proof}

\bibliographystyle{unsrt}

\end{document}